\definecolor{Gray}{gray}{0.85}
\definecolor{LightCyan}{rgb}{0.88,1,1}
\definecolor{darkblue}{rgb}{0.0, 0.0, 0.55}
\newcommand{\dd}{\mathrm{d}}
\newcommand{\beq}{\begin{equation}}
	\newcommand{\eeq}{\end{equation}}
\theoremstyle{plain}
\newtheorem{thm}{\textbf{Theorem}}
\newtheorem{defi}{\textbf{Definition}}
\newtheorem{lem}{\textbf{Lemma}}
\newtheorem{cor}{\textbf{Corollary}}
\newtheorem{prp}{\textbf{Proposition}}
\newcommand\HUGE{\@setfontsize\Huge{70}{100}}
\begin{document}

%\bibliographystyle{natbib}

%%%%%%%%%%%%%%%%%%%%%%%%%%%%%%%%%%%%%%%%%%%%%%%%%%%%%%%%%%%%%%%%%%%%%%%%%%%%%%
\title{\bf Conditional partial exchangeability: \\
  a probabilistic framework for multi-view clustering}
\author{ Beatrice Franzolini\thanks{Franzolini was partially supported by the National Recovery and Resilience Plan of Italy (PE1 FAIR—CUP B43C22000800006).}\\
    Bocconi Institute for Data Science and Analytics, Bocconi University\\
    and \\
    Maria De Iorio\thanks{This work was supported by the Singapore Ministry of Education Academic Research Fund Tier 2 under Grant MOE-T2EP40121-0004.} \,\,and Johan Eriksson \\
    Yong Loo Lin School of Medicine, National University of Singapore,\\ 
    Institute for Human Development and Potential, 
    A*STAR}
 \date{ }
  \maketitle

\bigskip
\begin{abstract}
Standard clustering techniques assume a common configuration for all features in a dataset. However, when dealing with multi-view or longitudinal data, the clusters' number, frequencies, and shapes may need to vary across features to accurately capture dependence structures and heterogeneity. In this setting, classical model-based clustering fails to account for within-subject dependence across domains. We introduce conditional partial exchangeability, a novel probabilistic paradigm for dependent random partitions of the same objects across distinct domains. Additionally, we study a wide class of Bayesian clustering models based on conditional partial exchangeability, which allows for flexible dependent clustering of individuals across features, capturing the specific contribution of each feature and the within-subject dependence, while ensuring computational feasibility.
\end{abstract}

\noindent%
{\it Keywords:}  Bayesian nonparametrics, Dynamic clustering, Hierarchical processes, Partial exchangeability, Random partitions, Unsupervised learning
\vfill

\newpage
\section{Clustering multi-view information} 
Clustering is arguably the most famous unsupervised learning technique. It involves grouping observations into clusters based on their \textit{similarities}. 
Standard clustering techniques assume a common clustering configuration of subjects across all features observed in a sample. However, given the complexity and dimension of modern datasets,  a unique clustering arrangement for all the features is often inadequate to describe the structure and the heterogeneity in the population under study.  For instance, in longitudinal data analysis, the underlying clustering structure of individuals is likely to change over time; in multi-view data \citep[see, for instance,][]{yang2018multi} multivariate information collected across distinct domains may require clusters' shapes and definitions to change from feature to feature. In this setting, a unique clustering configuration based on all the observed features not only may be hard to detect \citep[often leading to clusters of small size to accommodate heterogeneity in multi-dimensional spaces, cf.,][]{chandra2020escaping}, but will also mainly capture global patterns, down-weighting the idiosyncratic contribution of each feature. Moreover, it depends on features' dimensions, favoring higher dimensional features as more important in explaining the heterogeneity across subjects. See Figure~\ref{fig:dimEX} for a toy example illustrating the latter problem.  
In this work, we focus on clustering problems where multi-view or longitudinal information is available for the same subjects, and we allow the underlying clustering structure to change across features/time.

\begin{figure}[tbh]
    \centering
    \begin{subfigure}[t]{0.32\textwidth}
        \includegraphics[width=\textwidth]{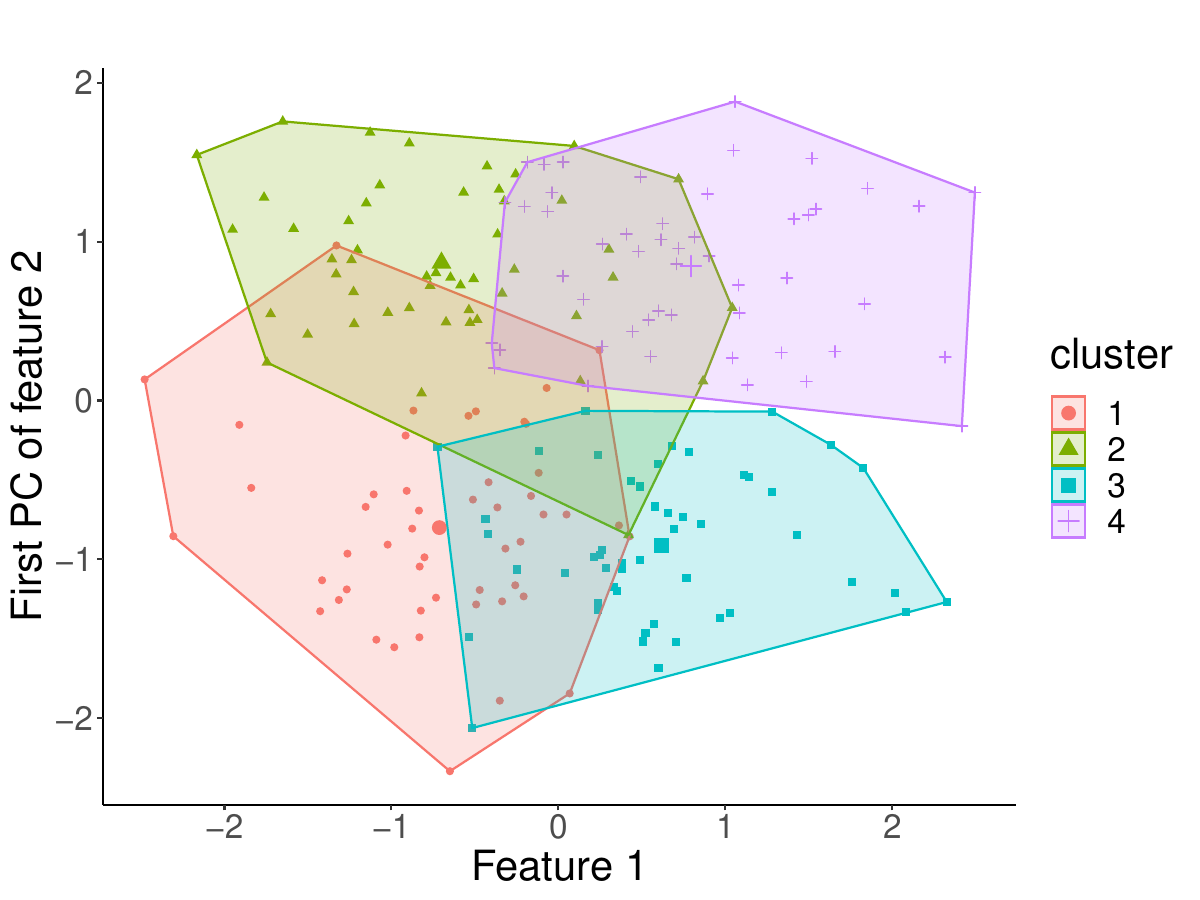}
        \caption{True clustering structure. Each point corresponds to a subject. Colors represent the true cluster assignment.}
    \end{subfigure}
    \hfill
    \begin{subfigure}[t]{0.32\textwidth}
        \includegraphics[width=\textwidth]{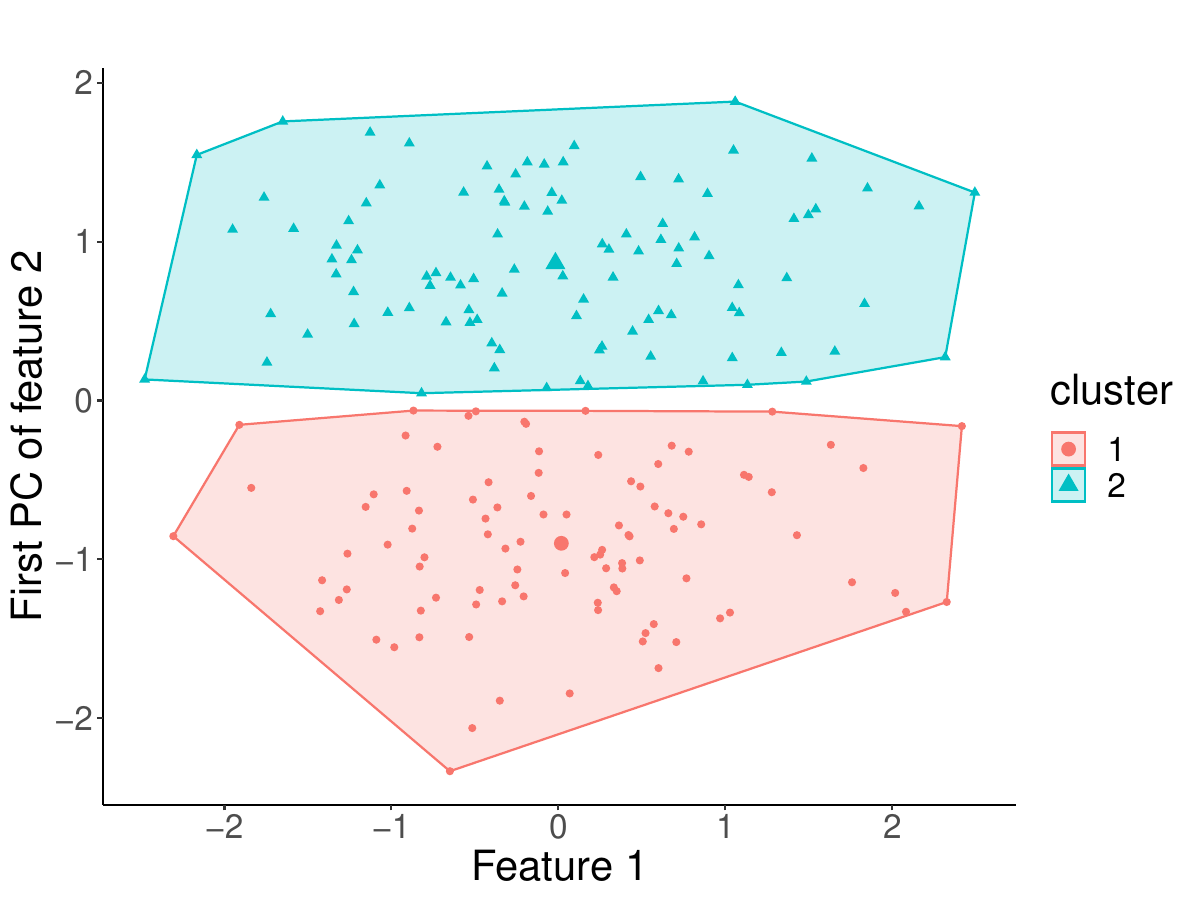}
        \caption{k-means clustering configuration with the number of clusters determined by elbow plot, gap statistics, and silhouette method.}
    \end{subfigure}
    \hfill
    \begin{subfigure}[t]{0.32\textwidth}
        \includegraphics[width=\textwidth]{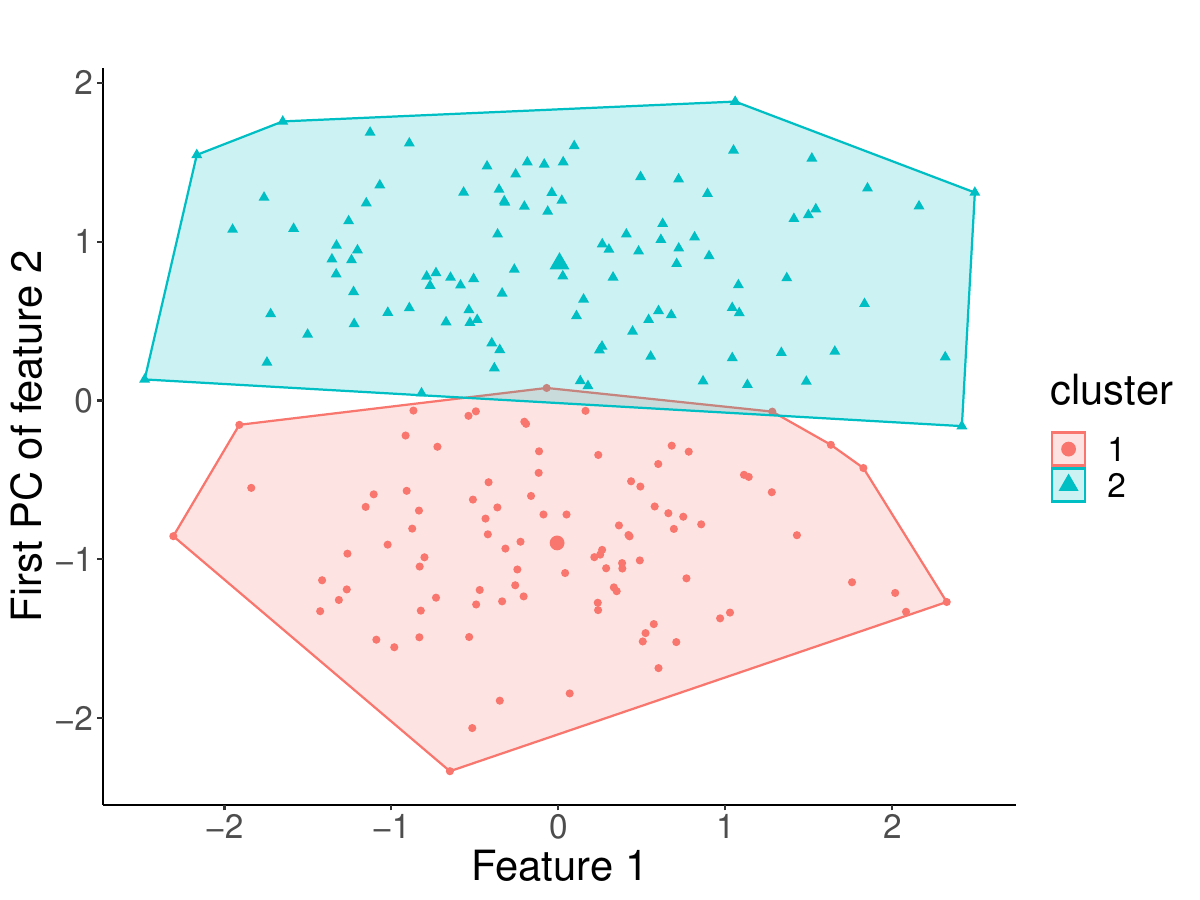}
        \caption{Dirichlet process mixture estimate of the clustering configuration obtained minimizing the variation of information loss function.}
    \end{subfigure}
    \caption{\label{fig:dimEX}Toy example. Data was simulated for two features and 200 subjects, with the true clustering configuration displayed in panel (a). The first feature (x-axis) is sampled from a univariate Normal with unitary variance and mean equal either to 1 or -1, depending on the true cluster assignment. The second feature (y-axis) is sampled from a three-variate Normal with identity covariance matrix and mean equal to either $(1,1,1)$ or $(-1,-1,-1)$ depending on the true cluster assignment; it is represented via its first principal component on the y-axis. The Dirichlet process estimate in panel c) is obtained with a Multivariate Normal kernel with identity covariance matrix, standard Multivariate Normal base measure, and concentration parameter equal to 0.1.  Both the clustering configurations obtained with k-means (panel b) and the Dirichlet process mixtures (panel c) are heavily informed by the second feature and ignore the information contained in the first feature.}
\end{figure}

The two main approaches for clustering are model-based and algorithmic methods. Model-based methods rely on distributional assumptions about the underlying data-generating mechanism of the observations in each cluster, leading to a mixture model. 
Unlike algorithmic techniques, model-based methods explicitly define the shape of a cluster in terms of probability distribution functions. Most popular model-based approaches include Bayesian infinite mixture models \citep{ferguson1983bayesian,lo1984class, barrios2013modeling} and Bayesian mixtures with a random number of components \citep{nobile1994bayesian,deblasi2015,miller2018mixture, argiento2019infinity}. They allow for data-driven automatic selection of the number of clusters for which no finite upper bound has to be fixed. 
Most traditional clustering approaches (both model-based and algorithmic) are designed for single-view data and aim at detecting a unique clustering configuration of individuals in a sample. In recent years, a wealth of proposals for algorithms to integrate multi-view information has appeared in the machine learning literature \citep[see,][for comprehensive reviews of the topic]{ yang2018multi,chen2022representation}. Nonetheless, such methods, while recognizing the multi-view nature of the data, provide again a single clustering configuration common to all the features, which may still fail to highlight the complementary information of each feature \citep{yao2019multi}. An interesting exception is provided by the algorithm proposed by \cite{yao2019multi}.
In the  Bayesian clustering literature, the focus is often placed on \emph{multi-sample data}, rather than \emph{multi-view} data, in the sense that there is an initial natural grouping of the subjects (for example, based on treatment groups in a clinical trial, or some level of a particular covariate) which is treated as deterministic and there is no overlap of subjects across groups. Then, clustering is performed within each group, with clusters possibly shared among groups. 
These models are obtained by inducing dependence between the group-specific random probability measures in the underlying mixture model and they have been the object of extensive research in recent decades \citep[see,][]{maceachern2000dependent, mueller2004amethod, teh2006teh, caron2007generalized, dunson2008kernel, ren2008dynamic, dunson2010nonparametric, rodriguez2010latent, taddy2010autoregressive, rodriguez2011nonparametric, lijoi2014bayesian, foti2015survey, caron2017generalized, griffin2017compound, deyoreo2018modeling, argiento2020hierarchical, bassetti2020hierarchical, ascolani2021predictive, beraha2021semi, denti2021common, zhou2021disentangled, quintana2022dependent, lijoi2023flexible}. Models built with this strategy may be effectively employed for clustering multi-sample data, i.e., when different clustering configurations refer to disjoint sets of subjects. 
However, we note that these are not suitable for multi-view data. 
As we show in this work, when they are applied to cluster multi-view or longitudinal data, such methods focus on marginal inference based on each feature and fail to capture the true nature of the multivariate dependence. In particular, in Section~\ref{s:mainCPE}, we show how this is a consequence of the fact that they disregard subjects' identities, i.e., that subjects are indeed the same for all the observed features or times. 

The Bayesian literature on clustering methods for longitudinal or multi-view information is rather limited. In this context,  the core challenge is to define a probabilistic model able to account for subjects' identities across multiple features. 
Bayesian clustering approaches that allow to both preserve subjects' identity and provide multiple clustering configurations, appear limited to the following specific models: the hybrid Dirichlet process \citep{petrone2009hybrid}, the enriched Dirichlet process \citep{wade2011enriched}, the separately exchangeable random partition models in \cite{lee2013nonparametric} and \cite{lin2021separate} and the temporal random partition model of \cite{page2022dependent}. Even though these models are quite different in nature, they all belong to the novel probabilistic framework we introduce here, which also enables gaining new insights about these existing approaches. 

The main contribution of this work is the introduction of \emph{conditional partial exchangeability} (CPE), a modeling principle for multi-view and dynamic probabilistic clustering. CPE is a condition imposed on the conditional law of the observable in multi-view data, inducing dependence across distinct clustering configurations of the same subjects while preserving their identities. An additional contribution is developing a specific class of mixture models that satisfy CPE, which we refer to as \emph{telescopic clustering models}. The introduction of these models further highlights the utility of CPE. We show that these models are analytically and computationally tractable and establish Kolmogorov consistency of the distribution of the observable. Finally, we investigate two models within this class in more detail: one with an infinite number of components and another with a random number of components. We provide algorithms for posterior estimation and showcase the performance of the proposed framework on real and simulated data.

%%%%%%%%%%%%%%%%%%%%%%%%%%%%%%%%%%%%%%%%%%%%%%%%%%%%%%%%%%%%%%%%%%%%%%%%%%%%%%%%%%%%%%%%%%%%%%%%%%%%%%%%%%%%%%%%%%%%%%%%%%%%%%%%%%%
\section{Conditional partial exchangeability }\label{s:mainCPE}
Let  $(X_{1i}, X_{2i})$, be features on   the $i-$th observational unit, with $i=1,\ldots,n$. For simplicity of explanation, we partition the feature vector into two sub-components and discuss how to extend to a number $L$ of components in Section~\ref{ss:Llayers}: $X_{1i} \in \mathbb{X}_1\subset\mathbb{R}^{{d}}$ is the observation recorded at layer 1, which can represent, for example,  either a vector of \emph{primary} features or observations corresponding to the initial time point $t=1$, and $X_{2i} \in \mathbb{X}_2 \subset \mathbb{R}^{p}$ is the observation recorded at layer 2, which can refer to either a vector of \emph{secondary} features or observations corresponding to a subsequent time point $t=2$.
Importantly, the support spaces $\mathbb{X}_1$ and $\mathbb{X}_2$ are not assumed to coincide. In particular, the dimensions $d$ and $p$ may be different.
Our goal is to estimate two clustering configurations $\rho_1$ and $\rho_2$, which correspond to the first and second layers, respectively, allowing for dependence between the two clustering configurations and employing a learning mechanism that takes into account subjects' identity thus capturing multivariate dependence. The partition  $\rho_j$, $j=1,2$, can be represented by the vector $\bm{c}_j= (c_{j1},\ldots,c_{jn})$ of subject-specific allocation variables, whose elements take value in the set $[n]:=\{1,2,\ldots,n\}$ and are such that $c_{ji} = c_{jl}$ if and only if subjects $i$ and $l$ belong to the same cluster at layer $j$. 
As typically done, we assume exchangeability for the joint observations $(X_{1i}, X_{2i})_{i\geq1}$, i.e., $\mathbb{P}\left[(X_{1i}, X_{2i})_{i=1}^n\in A\right] = \mathbb{P}\left[(X_{1\sigma(i)}, X_{2\sigma(i)})_{i=1}^n\in A\right]  $
for any $\sigma$  permutation of $n$ elements, $n\geq1$, and measurable set $A\subseteq (\mathbb{X}_1\times\mathbb{X}_2)^n$. 
A review of key preliminaries on exchangeable partitions and partial exchangeability is provided in Section S1 of the Supplement. For the sake of clarity, assume that the clustering configurations fully capture the  dependence structure between first and second layers features, i.e.,
$	(X_{11},\ldots,X_{1n})\perp (X_{21},\ldots,X_{2n}) \mid \rho_1, \rho_2$.
The latter is a common assumption in clustering models for multivariate data \citep[see, e.g.,][]{rogers2008investigating, kumar2011co, lock2013bayesian, gao2020clusterings, franzolini2022bayesian} as it often avoids identifiability issues. Nonetheless, it can be relaxed.

Exchangeability of the bivariate observations does not imply \emph{conditional exchangeability} of one layer given the other, which is instead undesirable in a multi-view setting. 
To better understand this point, let us see what happens if we do assume exchangeability for observations in the second layer conditionally on the first-layer partition, $\rho_1$. Formally, $(X_{2i})_{i\geq1}$ is conditionally exchangeable with respect to $\rho_1$, if, for any realization of $\rho_1$, for any $n\geq1$ and for any permutation $\sigma$,
\begin{equation}\label{eq:condexc}
p(X_{21},\ldots,X_{2n} \mid \rho_{1n}) = p(X_{2\sigma(1)},\ldots,X_{2\sigma(n)}\mid\rho_{1n})
\end{equation}
where $p$ denotes a joint density function (to be understood as a mass function in the discrete case).
This implies, for instance, that the joint distribution of a pair of second-layer observations is invariant to their clustering allocations at layer 1, in the sense that, for any set of three subjects $i,j$ and $k$, $p(X_{2i}, X_{2j} \mid c_{1i} = c_{1j} \neq c_{1k} ) = p(X_{2i}, X_{2k} \mid c_{1i} = c_{1j} \neq c_{1k})$. 
Thus, under conditional exchangeability, knowing that subjects $i$ and $j$ belong to the same cluster at layer 1 does not provide any information at layer 2 specific to those same two subjects, i.e., the model does not preserve subjects' identities. 
This unusual behavior of the learning mechanism under \emph{conditional exchangeability} arises because this assumption restricts the transfer of subject-level information (e.g., which subjects belong to the same cluster in $\rho_1$) between layers. Instead, only population-level information, such as the number of clusters and their frequencies in  
$\rho_1$, is propagated to the next layer. \citep[cf.,][]{page2022dependent}. 
 However, this assumption is also at the core of the majority of dependent Bayesian clustering methods \citep[see,][for a recent review]{ quintana2022dependent}, which are thus appropriate for multi-sample data but cannot be effectively applied in longitudinal and multi-view settings. 
An alternative to \emph{conditional exchangeability} (of the second layer given $\rho_1$) is offered in the Bayesian nonparametric literature by the Enriched Dirichlet process \citep{wade2011enriched}, which in contrast induces \emph{conditional independence}, i.e., given $\rho_1$, observations at the second layer are assumed exchangeable if they belong to the same first-layer cluster and independent otherwise. 
However, enriched constructions do not allow to define a prior for $(\rho_1,\rho_2)$ with full support because they force second-layer clusters to be nested within first-layer clusters: if two items are assigned to distinct clusters at layer 1, they cannot be assigned to the same cluster at layer 2. 

To define a flexible and general learning mechanism for Bayesian clustering of multi-view or longitudinal data,  clusters defined by $\rho_1$  should be treated neither as almost irrelevant as under \emph{conditional exchangeability} nor as too informative as under the  \emph{conditional independence} of enriched constructions.  
Ideally, an appropriate learning mechanism would a priori favor at layer 2 a clustering configuration similar to layer 1, but not necessarily nested. A general framework to achieve this goal is currently missing and it is the main contribution of this work.
To this end, we introduce \emph{conditional partial exchangeability} (CPE), formalized by the following definition and discussed thereafter. Given a partition $\rho_1$ and the corresponding subject-
specific allocation variables $\bm{c}_1$, let us denote with $\mathcal{P}(n; \rho_{1})$ the set of permutations of $[n]$ that preserve $\rho_{1}$, i.e. $\sigma \in \mathcal{P}(n; \rho_{1n})$ if and only if $\sigma$ is a permutation of $n$ elements such that $c_{1\sigma(i)} = c_{1i}$, for any $i \in [n]$.
\begin{defi}[Conditional partial exchangeability]\label{def:CPE}
	Given a (marginally) exchangeable sequence $(X_{2i})_{i\geq1}$ and a collection of coherent\footnote{the collection of partitions $(\rho_{1n})_{n\geq1}$ is said to be coherent if for any $n$, $\rho_{1n}$ can be obtained from $\rho_{1n+1}$ by removing the object $n+1$.} random partitions $(\rho_{1n})_{n\geq1}$, where $\rho_{1n}$ is a partition of $[n]$,  $(X_{2i})_{i\geq1}$ is said to be conditionally partially exchangeable (CPE) with respect to $(\rho_{1n})_{n\geq1}$ if and only if, for any $n\geq1$,  for any realization of $\rho_{1n}$, and for any $(i_1,\ldots,i_\ell), (j_1,\ldots,j_\ell) \subset [n]$, the following two conditions are satisfied
    \begin{itemize}
	\item[c-i)]$p(X_{21},\ldots,X_{2n} \mid \rho_{1n}) = p(X_{2\sigma(1)},\ldots,X_{2\sigma(n)}\mid\rho_{1n}),\enskip \text{ for any } \sigma \in \mathcal{P}(n; \rho_{1n})$;
	\item[c-ii)] $p(X_{2i_1},\ldots,X_{2i_\ell} \mid c_{1i_1}=\ldots=c_{1i_\ell}) = p(X_{2j_1},\ldots,X_{2j_\ell} \mid c_{1j_1}=\ldots=c_{1j_\ell})$.
    \end{itemize}
\end{defi}
To fully understand  Definition~\ref{def:CPE}, we need first to consider the fundamental differences between CPE (introduced here) and partial exchangeability \citep[as presented by ][and quickly reviewed in Section S1 of the Supplement]{deFinetti1938}. Importantly, partial exchangeability is a condition on the marginal distributions of a sequence of observations, thereby defining a class of sequences. In contrast, CPE specifies a condition on the dependence between a sequence of exchangeable observations and a random partition, thereby defining a class of dependence relationships. Moreover, considering the conditions required for the conditional law of the observations, CPE differs from partial exchangeability not only in its conditional nature, imposed by condition (c-i), but also in its requirement for marginal invariance, imposed by condition (c-ii). Condition (c-ii) entirely lacks an analog in the definition of partial exchangeability. For a more intuitive understanding of CPE: assume the observations are partitioned according to a clustering configuration at layer 1. Conditioning on this layer-1 partition, the observations exhibit partial exchangeability at layer 2, i.e., in layer 2, exchangeability holds among observations that are co-clustered in layer 1, but not necessarily across different layer-1 clusters (c-i). Moreover, CPE requires that groups of observations corresponding to different layer-1 clusters preserve the same within-cluster marginal distribution at the subsequent layer (c-ii), e.g., $p(X_{21} \mid c_{11}) = p(X_{22}\mid c_{12}) $ and  $p(X_{21}, X_{22} \mid c_{11} =  c_{12} ) = p(X_{23}, X_{24} \mid c_{13} = c_{14})$.

For the sake of notation, in the following, we omit the subscript $n$ when denoting the partition.
 The next theorem shows that CPE allows the preservation of subjects' identities when moving from one layer to another.
\begin{thm}[Subjects' identity across layers]\label{thmfirst}
	If $(X_{2i})_{i\geq1}$ is conditionally partially exchangeable with respect to  $\rho_{1}$, then, for any measurable $A$ 
\begin{itemize}
    \item [s-i)]$\mathbb{P}( (X_{2i}, X_{2j})\in A^2 \mid c_{1i} = c_{1j} \neq c_{1k} ) \geq \mathbb{P}((X_{2i}, X_{2k})\in A^2 \mid c_{1i} = c_{1j}\neq c_{1k})$;
    \item[s-ii)] in general, $p( X_{2i}, X_{2j} \mid c_{1i} \neq c_{1j} ) \neq 	p( X_{2i} \mid c_{1i} \neq c_{1j} ) 	p(  X_{2j} \mid c_{1i} \neq c_{1j} ) $,  
\end{itemize}
where a strict inequality in s-i) is achievable as long as $(X_{2i})_{i\geq1}$ is not conditionally exchangeable with respect to $\rho_1$.
\end{thm}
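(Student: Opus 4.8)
The plan is to reduce everything to a de Finetti--type decomposition of the conditional law of the second layer given $\rho_1$: condition (c-i) gives exchangeability \emph{within} each layer-1 cluster, hence a family of within-cluster random measures; condition (c-ii) identifies these measures in distribution; and then (s-i) drops out of Cauchy--Schwarz, with its equality case accounting for the closing remark about conditional exchangeability. Statement (s-ii) will instead be obtained by exhibiting a CPE model in which observations belonging to distinct layer-1 clusters are genuinely dependent.

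First I would fix a measurable $A\subseteq\mathbb{X}_2$ and three distinct indices $i,j,k$, set $E=\{c_{1i}=c_{1j}\neq c_{1k}\}$ and assume $\mathbb{P}(E)>0$ (otherwise there is nothing to prove). By marginal exchangeability of $(X_{2i})_{i\ge1}$ together with (c-i), conditionally on $\rho_1$ the second-layer observations indexed by any fixed block $b$ of $\rho_1$ form an exchangeable sequence; since the blocks of an exchangeable random partition are a.s.\ either infinite or singletons, de Finetti's theorem supplies random probability measures $(\tilde P_b)_b$, indexed by the blocks $b$ of $\rho_1$, such that, given $\rho_1$ and $(\tilde P_b)_b$, the $X_{2i}$ are independent with $X_{2i}\sim \tilde P_{b(i)}$, where $b(i)$ is the block of $i$; a finite block needs a separate, elementary argument and only yields equality in (s-i). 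Condition (c-ii) says that the law of $(X_{2i_1},\dots,X_{2i_\ell})$ given $c_{1i_1}=\dots=c_{1i_\ell}$ is a mixture of $\tilde P_b^{\otimes \ell}$ over the carrying blocks $b$ and does not depend on the tuple, so all mixed moments of the $\tilde P_b$ --- in particular the relevant second moments of $\tilde P_b(A)$ --- coincide across clusters.

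The core step is then short. On $E$ one has $b(i)=b(j)$ and $b(k)\neq b(i)$, so $\mathbb{P}((X_{2i},X_{2j})\in A^2\mid E)=\mathbb{E}[\tilde P_{b(i)}(A)^2\mid E]$, whereas $\mathbb{P}((X_{2i},X_{2k})\in A^2\mid E)=\mathbb{E}[\tilde P_{b(i)}(A)\,\tilde P_{b(k)}(A)\mid E]$, the latter because conditionally on the $\tilde P_b$'s these two observations lie in different blocks and are independent. Conditional Cauchy--Schwarz bounds the cross term by $\mathbb{E}[\tilde P_{b(i)}(A)^2\mid E]^{1/2}\,\mathbb{E}[\tilde P_{b(k)}(A)^2\mid E]^{1/2}$, and by (c-ii) the two factors are equal, which is exactly (s-i). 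For the last sentence of the theorem, equality in Cauchy--Schwarz forces $\tilde P_{b(k)}(A)=\tilde P_{b(i)}(A)$ almost surely; were this to hold for every $A$ and every admissible triple, all the $\tilde P_b$ would collapse to a single random measure and $(X_{2i})_{i\ge1}$ would be conditionally exchangeable with respect to $\rho_1$ in the sense of \eqref{eq:condexc} --- so, contrapositively, whenever conditional exchangeability fails, (s-i) is strict for some $A$ and some triple. For (s-ii) I would note that (c-i)--(c-ii) restrict only the within-cluster behaviour of the second layer and say nothing about observations in distinct layer-1 clusters, and then give a witness: draw $\rho_1$ from any exchangeable partition prior, let $\rho_2$ equal $\rho_1$ with probability $1/2$ and the single-block partition with probability $1/2$ (independently of $\rho_1$), and take second-layer data i.i.d.\ within $\rho_2$-clusters from a common parametric family with a common base measure. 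This model is CPE, yet on $\{c_{1i}\neq c_{1j}\}$ the event $\{c_{2i}=c_{2j}\}$ has positive probability, so $X_{2i}$ and $X_{2j}$ share a cluster parameter with positive probability and the factorization in (s-ii) fails.

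The hardest part, and where I would concentrate, is the identification step: checking that condition (c-ii) really does equalize the conditional second moments $\mathbb{E}[\tilde P_{b(i)}(A)^2\mid E]$ and $\mathbb{E}[\tilde P_{b(k)}(A)^2\mid E]$ \emph{under the event $E$}, since (c-ii) is phrased as invariance conditional on co-clustering of a tuple rather than conditional on the full three-way configuration; transferring the invariance to the quantities that actually occur in the Cauchy--Schwarz bound will require careful bookkeeping of conditioning events (or an auxiliary symmetrization over which cluster plays which role), and the finite-block situation needs its own, easy, treatment.
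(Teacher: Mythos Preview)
Your proposal is correct and takes essentially the same approach as the paper: invoke de Finetti's representation for partial exchangeability from (c-i) to obtain block-indexed random measures $\tilde p_m$, use (c-ii) to identify their marginals, and deduce (s-i) from $\mathbb{E}[\tilde p_m(A)^2]\ge\mathbb{E}[\tilde p_m(A)\tilde p_{m'}(A)]$ (the paper leaves Cauchy--Schwarz implicit), while for (s-ii) the paper simply remarks that dependent $\tilde p_m$'s need not factorize rather than building an explicit witness as you do. The conditioning subtlety you flag at the end---transferring (c-ii) to the three-way event $E$---is genuine, but the paper's terse proof does not address it either.
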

All proofs can be found in the Appendix.
To preserve subjects' identity, we will always require that the inequality s-i) in Theorem~\ref{thmfirst} is strict for any measurable $A$: $0<\mathbb{P}(X_{2i}\in A)<1$. This ensures the non-degeneracy of  \emph{conditional exchangeability} in \eqref{eq:condexc} for which the left and right terms in s-i) would be equal. Moreover, we will also require the inequality in s-ii) to hold, and, thus, the non-degeneracy of \emph{conditional independence} of enriched constructions.   
Nonetheless, CPE recovers \emph{conditional exchangeability} \citep[as in the models in ][]{quintana2022dependent} and \emph{conditional independence} (as in enriched constructions) as limiting cases of minimal and maximal preservation of subjects' identities, corresponding to $\mathbb{P}( (X_{2i}, X_{2j})\in A^2 \mid c_{1i} = c_{1j} \neq c_{1k} ) = \mathbb{P}((X_{2i}, X_{2k})\in A^2 \mid c_{1i} = c_{1j}\neq c_{1k})$ and $p( X_{2i}, X_{2j} \mid c_{1i} \neq c_{1j} ) = 	p( X_{2i} \mid c_{1i} \neq c_{1j} ) 	p(  X_{2j} \mid c_{1i} \neq c_{1j} ) $, respectively.

The following propositions illustrate how existing Bayesian clustering approaches do satisfy or not CPE. 

\begin{prp}[Temporal random partition model]\label{prop:t-RPM}
	If $(X_{1i},\ldots, X_{Ti})_{i=1}^n$ follows the temporal random partition model (t-RPM) of \cite{page2022dependent} - Section 2, then $(X_{ti})_{i\geq1}$ is conditionally partially exchangeable with respect to $\rho_{t-1}$, but not conditionally exchangeable.  
\end{prp}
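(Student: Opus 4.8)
The plan is to unwind the explicit construction of the t-RPM and verify the two defining conditions of Definition~\ref{def:CPE} directly, and then to exhibit the dependence that rules out conditional exchangeability. Recall that in Section~2 of \cite{page2022dependent} the partition at time $t$ is obtained from the one at time $t-1$ through unit-specific auxiliary variables $\gamma_{t1},\dots,\gamma_{tn}\iid\mathrm{Bernoulli}(\alpha_t)$: units with $\gamma_{ti}=1$ retain their layer-$(t-1)$ cluster label, while units with $\gamma_{ti}=0$ are removed and re-allocated through an EPPF-type urn scheme conditionally on the retained units; observations are then generated, given $\rho_t$, by drawing i.i.d.\ cluster parameters from a base measure and sampling $X_{ti}$ from a kernel indexed by the parameter of its layer-$t$ cluster. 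Two structural features of this recipe will drive the proof: (i) relabelling the units by an arbitrary permutation $\sigma$ sends $(\rho_{t-1},\gamma_t,\rho_t)\mapsto(\sigma\!\cdot\!\rho_{t-1},\sigma\!\cdot\!\gamma_t,\sigma\!\cdot\!\rho_t)$ with unchanged probability, since the $\gamma_{ti}$ are i.i.d.\ and both the urn scheme and the kernel are exchangeable; and (ii) given $\rho_t$, the law of $(X_{t1},\dots,X_{tn})$ factorises as a product over the blocks of $\rho_t$ of a within-block exchangeable marginal, hence is invariant under any relabelling that preserves $\rho_t$.

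Condition (c-ii) I would then obtain essentially for free from the joint exchangeability of $(X_{1i},\dots,X_{Ti})_{i\ge1}$ over subjects, which the t-RPM enjoys by construction: for any two size-$\ell$ index tuples there is a permutation of $[n]$ mapping one onto the other, and since $\{c_{t-1,i_1}=\cdots=c_{t-1,i_\ell}\}$ is a symmetric event and $(X_{ti_1},\dots,X_{ti_\ell})$ a symmetric function of the involved coordinates, the corresponding conditional laws coincide. For condition (c-i), I would fix $\sigma\in\mathcal{P}(n;\rho_{t-1})$, use feature~(i) together with $\sigma\!\cdot\!\rho_{t-1}=\rho_{t-1}$ to conclude that $\sigma\!\cdot\!\rho_t$ given $\rho_{t-1}$ has the same law as $\rho_t$ given $\rho_{t-1}$, and then compose with feature~(ii), re-indexing the sum over layer-$t$ partitions, to get $p(X_{t\sigma(1)},\dots,X_{t\sigma(n)}\mid\rho_{t-1})=p(X_{t1},\dots,X_{tn}\mid\rho_{t-1})$, which is exactly (c-i). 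The two hypotheses of Definition~\ref{def:CPE}, namely coherence of $(\rho_{(t-1)n})_{n\ge1}$ and marginal exchangeability of $(X_{ti})_{i\ge1}$, are inherited from the t-RPM construction, so CPE holds.

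To show that $(X_{ti})_{i\ge1}$ is not conditionally exchangeable with respect to $\rho_{t-1}$, I would contradict the marginal consequence of \eqref{eq:condexc} spelled out in Section~\ref{s:mainCPE}: pick three units $i,j,k$ and compare $p(X_{ti},X_{tj}\mid c_{t-1,i}=c_{t-1,j}\neq c_{t-1,k})$ with $p(X_{ti},X_{tk}\mid c_{t-1,i}=c_{t-1,j}\neq c_{t-1,k})$. Under the carry-over mechanism, on this event the sub-event $\{\gamma_{ti}=\gamma_{tj}=1\}$ — of probability $\alpha_t^2>0$ — already forces $c_{ti}=c_{tj}$, whereas for the pair $(i,k)$, which sits in distinct layer-$(t-1)$ blocks, reaching $c_{ti}=c_{tk}$ requires at least one re-allocation; hence $\mathbb{P}(c_{ti}=c_{tj}\mid c_{t-1,i}=c_{t-1,j}\neq c_{t-1,k})>\mathbb{P}(c_{ti}=c_{tk}\mid c_{t-1,i}=c_{t-1,j}\neq c_{t-1,k})$. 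Since a non-degenerate base measure makes a pair of layer-$t$ observations genuinely dependent precisely on the event that they share a cluster, the two pairwise laws differ, which both contradicts conditional exchangeability and, in passing, yields the strict version of s-i) in Theorem~\ref{thmfirst}.

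I expect the only delicate point to be the precise bookkeeping of the t-RPM re-allocation step — in particular whether re-allocated units may rejoin retained clusters or only open new ones — but this affects merely the explicit form of $p(\rho_t\mid\rho_{t-1})$, not features~(i) and~(ii) used above, so the argument should be robust to the exact convention. A secondary care is needed to exclude the degenerate regimes ($\alpha_t\in\{0,1\}$, atomic base measure) in which the strict inequality, and hence the failure of conditional exchangeability, could collapse; these are ruled out by the standing assumptions on the t-RPM.
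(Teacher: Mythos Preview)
Your proposal is correct and reaches the same conclusion as the paper, but by a genuinely different, more structural route. The paper argues at the level of the explicit t-RPM formula
\[
\mathbb{P}[\rho_t=\lambda\mid\gamma_t,\rho_{t-1}]
=\frac{\mathbb{P}[\rho_t=\lambda]\,\mathbbm{1}\big(\lambda^{\mathcal{R}_t}=\rho_{t-1}^{\mathcal{R}_t}\big)}
{\sum_{\lambda'}\mathbb{P}[\rho_t=\lambda']\,\mathbbm{1}\big(\lambda'^{\mathcal{R}_t}=\rho_{t-1}^{\mathcal{R}_t}\big)},
\]
and proves, via a preparatory lemma on how permutations act on reduced partitions $\rho^{\mathcal{R}(\bm\gamma)}$, that this quantity satisfies $\mathbb{P}[\rho_t=\lambda\mid\gamma_t,\rho_{t-1}]=\mathbb{P}[\rho_t=\sigma(\lambda)\mid\sigma(\gamma_t),\rho_{t-1}]$ precisely when $\sigma\in\mathcal{P}(n;\rho_{t-1})$. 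You instead package this as a single equivariance statement (your feature~(i)) and obtain the invariance of $\mathcal{L}(\rho_t\mid\rho_{t-1})$ under $\sigma$ in one line; your ``delicate point'' about the re-allocation bookkeeping is exactly where the paper's lemma does the work, so the two arguments dovetail. Your approach is more portable to other constructions satisfying the same equivariance, while the paper's makes the dependence on the specific t-RPM mechanism fully explicit.

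Two further differences are worth noting. First, you verify condition (c-ii) via joint exchangeability of the whole array $(X_{1i},\dots,X_{Ti})_{i\ge1}$, which the paper's proof does not address at all; this is a genuine addition. Second, for the failure of conditional exchangeability the paper simply exhibits a concrete counterexample with $n=3$, $\rho_{t-1}=\{\{1,2\},\{3\}\}$ and the transposition $\sigma=(1\,3)$, whereas you argue via the strict inequality of co-clustering probabilities. Your heuristic is sound, but as written it is only a sketch: once you condition on $\gamma_t$, the contributions from configurations where exactly one of the three units is retained do not cancel by an obvious symmetry, so turning the intuition into a rigorous strict inequality requires either tracking those terms or falling back on a direct counterexample as the paper does.
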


\begin{prp}[Separately exchangeable NDP-CAM]\label{prop:separate}
	If $(X_{1i},\ldots, X_{Ji})_{i\geq1}$ follows the separate exchangeable random partition model of \cite{lin2021separate} - Section 3, then, for any $j$ and $j'$, $(X_{j i})_{i\geq1}$ is conditionally partially exchangeable with respect to $\rho_j'$ but not conditionally exchangeable.  
\end{prp}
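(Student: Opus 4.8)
The plan is to read off the two structural properties of the separately exchangeable NDP-CAM of \cite{lin2021separate} that do all the work. First, the whole generative mechanism is symmetric in the subjects $i=1,\dots,n$, so the joint law of $\big(X_{ji},\bm{c}_{j'}\big)_{i=1}^n$ is exchangeable for every pair of views $(j,j')$; in particular $(X_{ji})_{i\ge1}$ is marginally exchangeable and $(\rho_{j'n})_{n\ge1}$ is a coherent sequence of random partitions, so Definition~\ref{def:CPE} is applicable. Second, the model attaches to each subject $i$ a single latent distributional label $\zeta_i$, \emph{shared across all views}, such that (a) $\bm\zeta=(\zeta_1,\dots,\zeta_n)$ is exchangeable; (b) conditionally on $\bm\zeta$ the view-specific objects $(\bm{c}_j,X_{j\cdot})$, $j=1,\dots,J$, are mutually independent (the per-view random measures are drawn independently, and $\rho_{j'}$ does not depend on the common atoms); and (c) conditionally on $\bm\zeta$, subjects sharing the same value of $\zeta_i$ have conditionally i.i.d.\ $(c_{ji},X_{ji})$ within each view. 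Property (a)--(c) are exactly the ingredients needed for the non-degeneracy conclusion below, while subject-level symmetry alone will give the two CPE conditions.

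For condition (c-ii): given any two $\ell$-subsets $(i_1,\dots,i_\ell)$ and $(j_1,\dots,j_\ell)$ of $[n]$, pick a permutation $\tau$ with $\tau(i_m)=j_m$ and apply subject-level exchangeability of $\big(X_{ji},\bm{c}_{j'}\big)_i$; this maps the event $\{c_{j'i_1}=\dots=c_{j'i_\ell}\}$ to $\{c_{j'j_1}=\dots=c_{j'j_\ell}\}$ and the corresponding observations to one another, so the two conditional laws coincide. For condition (c-i): fix a realization $\rho$ of $\rho_{j'}$ and $\sigma\in\mathcal{P}(n;\rho)$. Write $p(X_{j1},\dots,X_{jn}\mid\rho_{j'}=\rho)$ by summing $p\big((X_{ji})_i,\bm{c}_{j'}=c\big)$ over all allocation vectors $c$ inducing the partition $\rho$; applying subject-level exchangeability with the permutation $\sigma$, and using that $\sigma$ preserves $\rho$ so that $c\mapsto c\circ\sigma$ is a bijection of this fibre onto itself, yields $p(X_{j1},\dots,X_{jn}\mid\rho_{j'}=\rho)=p(X_{j\sigma(1)},\dots,X_{j\sigma(n)}\mid\rho_{j'}=\rho)$, which is (c-i) (the case $j=j'$ being even more direct, as $\sigma$ then only permutes observations within blocks of $\rho_j$). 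Hence $(X_{ji})_{i\ge1}$ is CPE with respect to $\rho_{j'}$ for all $j,j'$.

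It remains to show that the model is not conditionally exchangeable with respect to $\rho_{j'}$, which by Theorem~\ref{thmfirst} amounts to checking that the inequality in s-i) can be made strict. Take three subjects $i,l,k$ and condition on $\{c_{j'i}=c_{j'l}\neq c_{j'k}\}$. Using property (b), the view-$j$ observations are conditionally independent of this event given $\bm\zeta$, so marginalizing over $\bm\zeta$ produces a mixture in which the joint law of $(X_{ji},X_{jl})$ is weighted by $\mathbb{P}(\zeta_i=\zeta_l\mid c_{j'i}=c_{j'l}\neq c_{j'k})$ and that of $(X_{ji},X_{jk})$ by $\mathbb{P}(\zeta_i=\zeta_k\mid c_{j'i}=c_{j'l}\neq c_{j'k})$. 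Two monotone facts then give a strict gap: co-clustering of $i$ and $l$ in view $j'$ is strictly positive evidence that $\zeta_i=\zeta_l$, whereas separation of $i$ and $k$ is (weak) evidence against $\zeta_i=\zeta_k$, so the first weight strictly exceeds the second; and within view $j$ a pair with a shared distributional label is strictly more positively associated than a pair with distinct labels (sharing both the block-specific weights and the common atoms, rather than only the atoms). Concretely one exhibits $n=3$, $\rho_{j'}=\{\{i,l\},\{k\}\}$ and the transposition $\sigma=(l\,k)$ for which $p(X_{ji},X_{jl}\mid\rho_{j'})\neq p(X_{ji},X_{jk}\mid\rho_{j'})$, contradicting \eqref{eq:condexc}.

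I expect the main obstacle to be this last step: converting ``co-clustering in view $j'$ raises the posterior probability of a shared distributional label'' into a genuinely \emph{strict} separation of the two view-$j$ conditional laws. This requires pinning down mild non-degeneracy conditions on the NDP-CAM hyperparameters (finite, positive concentration for the distributional clustering; non-degenerate within-block mixing weights and a non-degenerate observation kernel), verifying that under them both the posterior-shift inequality and the within-view association inequality are strict, and combining them monotonically; the CPE verification itself is routine once (a)--(c) are in place.
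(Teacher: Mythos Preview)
Your verification of the two CPE conditions is correct and takes a different, more abstract route than the paper. You work directly from subject-level exchangeability of $\big((X_{ji})_i,\bm{c}_{j'}\big)$, which holds for any model built from conditionally i.i.d.\ subject-level latents; the fibre-bijection argument for (c-i) is clean and your (c-ii) argument is immediate. The paper instead exploits the specific NDP-CAM structure: it conditions on the view-level label and writes
\[
p(X_{j\cdot}\mid\rho_{j'}=\rho)=\mathbb{P}(S_j=S_{j'})\,p(X_{j\cdot}\mid\rho_j=\rho)+\mathbb{P}(S_j\neq S_{j'})\,p(X_{j\cdot}),
\]
using that $S_j=S_{j'}$ forces $\rho_j=\rho_{j'}$ while $S_j\neq S_{j'}$ makes $X_{j\cdot}$ independent of $\rho_{j'}$. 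Your argument buys generality; the paper's decomposition buys an explicit form of the conditional law, which is exactly what is needed to finish the second half of the statement.

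The gap is in your non-conditional-exchangeability step. The single subject-level label $\zeta_i$ with properties (a)--(c) that you posit does not exist in this model. Cross-view dependence in the separately exchangeable NDP-CAM is carried by the \emph{view-level} labels $S_j$, each selecting a column of the shared matrix $(M_{ik})_{i,k}$; there is no master subject-level clustering. If you try $\zeta_i=(M_{ik})_{k\ge1}$, then (b) requires further conditioning on $\bm\pi$ and the atoms, and worse, $\mathbb{P}(\zeta_i=\zeta_l)=0$ almost surely (an infinite product of factors strictly below $1$), so the ``posterior shift toward $\zeta_i=\zeta_l$'' argument is vacuous. If instead you take $\zeta_i$ to be a single coordinate, (b) fails outright: two views with $S_j=S_{j'}$ share the identical partition regardless of subject labels. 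The paper sidesteps this entirely: from the $S_j$-decomposition above, the difference $p(X_{j\cdot}\mid\rho_{j'}=\rho)-p(X_{j\sigma(\cdot)}\mid\rho_{j'}=\rho)$ reduces to $\mathbb{P}(S_j=S_{j'})\big[p(X_{j\cdot}\mid\rho_j=\rho)-p(X_{j\cdot}\mid\rho_j=\sigma^{-1}(\rho))\big]$, and a concrete $n=3$ example with $\sigma\notin\mathcal{P}(3;\rho)$ shows this is nonzero. Replace your $\zeta_i$-mixture heuristic with this direct decomposition.
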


\begin{prp}[Dependent Dirichlet processes]\label{prop:DDP}
	If $(X_{1i}, X_{2i})_{i\geq1}$ follows a mixture model with mixing probabilities provided by dependent processes of the type described in \cite{maceachern2000dependent} and \cite{quintana2022dependent} = Section 2, then, conditionally on $\rho_1$, $(X_{2i})_{i\geq1}$ is conditionally exchangeable. 
\end{prp}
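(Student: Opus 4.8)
\textbf{Proof plan for Proposition~\ref{prop:DDP}.} The strategy is to exploit the one structural feature shared by all DDP-type mixtures: every bit of cross-layer dependence is routed through the dependent random probability measures, and the layer-1 and layer-2 ingredients of the \emph{same} subject are never linked otherwise. Such a model admits the hierarchical description: (i) a dependent pair $(G_1,G_2)$ --- equivalently, paired weight/atom sequences --- is drawn from the prior; (ii) conditionally on $(G_1,G_2)$, allocation variables $s_{1i}\mid G_1$ and $s_{2i}\mid G_2$ are drawn, independently across $i$ \emph{and} across the two layers, with $X_{ji}\mid s_{ji}\sim k(\cdot\mid\theta_{j,s_{ji}})$ where $\theta_{j\cdot}$ are the atoms of $G_j$. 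The partition $\rho_1$ is, by definition, the partition of $[n]$ induced by the ties among $(s_{11},\dots,s_{1n})$, hence --- given $(G_1,G_2)$ --- a measurable function of the layer-1 allocations alone.

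First I would record the consequent conditional independence: given $(G_1,G_2)$, the layer-1 allocations $(s_{1i})_{i=1}^n$ are independent of the layer-2 allocations $(s_{2i})_{i=1}^n$, and hence of $(X_{2i})_{i=1}^n$; since $\rho_{1n}$ is a function of $(s_{1i})_{i=1}^n$, this yields $\rho_{1n}\perp (X_{2i})_{i=1}^n \mid (G_1,G_2)$. Using in addition that, given $G_2$, the $X_{2i}$ are i.i.d.\ from the mixture $\int k(\cdot\mid\theta)G_2(\dd\theta)$, a disintegration over the process gives
\begin{equation*}
p(X_{21},\ldots,X_{2n}\mid\rho_{1n})\;=\;\int \prod_{i=1}^{n}\Big(\int k(X_{2i}\mid\theta)\,G_2(\dd\theta)\Big)\,\Pi\big(\dd G_2\mid\rho_{1n}\big),
\end{equation*}
where $\Pi(\cdot\mid\rho_{1n})$ is the posterior law of $G_2$ given the layer-1 partition --- which does genuinely depend on $\rho_{1n}$, through the dependence between $G_1$ and $G_2$ --- but does not involve $(X_{21},\ldots,X_{2n})$.

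The conclusion is then immediate by symmetry: the integrand $\prod_{i=1}^n\int k(X_{2i}\mid\theta)G_2(\dd\theta)$ is invariant under every permutation of its arguments and $\Pi(\dd G_2\mid\rho_{1n})$ is untouched by such a permutation, so the right-hand side --- and hence $p(X_{21},\ldots,X_{2n}\mid\rho_{1n})$ --- is invariant under \emph{every} permutation $\sigma$, not merely those in $\mathcal{P}(n;\rho_{1n})$; this is exactly conditional exchangeability~\eqref{eq:condexc}. The only point needing genuine care, rather than a deep obstacle, is the first step: one must verify that the class ``of the type described in \cite{maceachern2000dependent} and \cite{quintana2022dependent}, Section~2'' is precisely the one in which the within-subject layer-1/layer-2 pair is coupled only through $(G_1,G_2)$ --- which is exactly what separates it from the CPE constructions of Propositions~\ref{prop:t-RPM}--\ref{prop:separate}. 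A minor technicality is the conditioning on the event $\{\rho_{1n}=\cdot\}$; one may equivalently condition on $\bm{c}_1$ (or on the allocations $(s_{1i})_i$), which leaves the symmetry argument unchanged. As a sanity check, observe that this is consistent with Theorem~\ref{thmfirst}: DDP mixtures sit at the degenerate endpoint where s-i) holds with equality.
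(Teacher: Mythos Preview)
Your proposal is correct and follows essentially the same approach as the paper: both arguments disintegrate $p(X_{21},\ldots,X_{2n}\mid\rho_{1n})$ over the layer-2 random measure, invoke the conditional independence $(X_{2i})_{i}\perp\rho_{1n}\mid G_2$ inherent in the DDP construction, and conclude by the permutation symmetry of the i.i.d.\ product $\prod_i\int k(X_{2i}\mid\theta)G_2(\dd\theta)$. Your write-up is slightly more explicit about routing the conditional-independence claim through the allocation variables and the joint $(G_1,G_2)$, whereas the paper states the key equality $p(X_{2\cdot}\mid G_{x'},\rho_x)=p(X_{2\cdot}\mid G_{x'})$ directly; otherwise the two are the same argument.
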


Finally, for the remainder of this work, we demonstrate that the strength of CPE extends beyond providing a condition to preserve subjects’ identities in multi-view probabilistic clustering. Instead, it can also serve as a \emph{constructive} definition that, due to its conditional formulation, facilitates the development and analysis of various clustering processes while ensuring analytical and posterior computational tractability.  

\section{The class of telescopic clustering models}\label{s:telmodel}
\subsection{A general telescopic clustering model}

First-layer observations $(X_{1i})_{i=1}^n$ are assumed to be distributed according to a mixture model \citep{ferguson1983bayesian,lo1984class}:
\begin{equation}
	\label{eq:firstlayergen}
	X_{1i}\mid \tilde p_{1} \overset{iid}{\sim} \int_{\Theta_1} k_1(X_{1i}, \theta) \, \tilde p_{1}(\dd\theta), \qquad  \text{for } i=1,\ldots, n,\\
\end{equation}
where $k_1(\cdot, \cdot)$ is a  kernel defined on $(\mathbb{X}_1,\Theta_1)$, $\tilde p_{1}$ is an almost-surely discrete random probability, i.e., $\tilde p_{1} \overset{a.s.}{=}  \sum_{m=1}^{M} w_m \delta_{\theta^{\star}_m}$, with $M\in\mathbb{N}\cup\{+\infty\}$ and $(w_m,\theta^{\star}_m)_{m=1}^M$ random variables such that $\sum_{m=1}^M w_m \overset{a.s.}{=} 1$. In the following, for notational convenience, the set $[M]:=\{1,\ldots,M\}$  denotes the set of the first $M$ natural numbers, when $M$ is finite, and the set of the natural numbers $\mathbb{N}$, when $M =\infty$.   Model \eqref{eq:firstlayergen} can be rewritten in   terms of the allocation 
vector  $\bm{c}_1 = (c_{11},\ldots,c_{1n})$, $i=1,\ldots,n$, defined in Section~\ref{s:mainCPE}:
$	X_{1i}\mid c_{1i}=m, \bm{\theta}^{\star}\overset{ind}{\sim} k_1(X_{1i}; {\theta^{\star}_{m}})
$.
%The ties in the vector of latent parameters $\bm{c}_1$  determine the partition $\rho_1$ of the observations into different clusters.
In the following, we assume that the subject-specific allocation variables $\bm{c}_1$ and the independent and identically distributed cluster-specific parameters $\bm{\theta}^{\star} = (\theta^{\star}_m)_{m=1}^M$ are apriori independent so that the corresponding mixing random probability $\tilde p_{1}$ belongs to the class of species sampling processes  \citep{pitman1996some}. 
To satisfy CPE, the second-layer conditional model is defined as
\begin{equation}
	\label{eq:secondlayergen}
	X_{2i} \mid c_{1i} = m, (\tilde p_{21},  \ldots, \tilde p_{2M}) \overset{ind}{\sim} \int_{\Theta_2} k_2(X_{2i}, \theta)\, \tilde p_{2m}(\dd \theta), \qquad \text{for } i=1,\ldots,n,
\end{equation}
where $k_2$ is a kernel defined on $(\mathbb{X}_2,\Theta_2)$, $M\in\mathbb{N} \,\cup \,\{+\infty\}$ is the number of mixture components at the first layer, $(\tilde p_{21},  \ldots, \tilde p_{2M})$ is a vector of (possibly dependent) almost-surely discrete and exchangeable random probability measures. Thus,  when $M=\infty$, $(\tilde p_{21},  \ldots, \tilde p_{2M})$ is  a countably infinite number of probability measures indexed by $\mathbb{N}$.
Bringing everything together, we arrive at the following definition for this class of models.
\begin{defi}[Telescopic clustering model]\label{def:modelclass}
	A random matrix $(X_{1i},X_{2i})_{i\geq 1}$ taking values in $(\mathbb{X}_1\times\mathbb{X}_2)^{\infty}$ is said to follow a telescopic clustering model (with two layers) if it admits the following representation:
	\[
	X_{1i}\mid \tilde p_{1} \overset{iid}{\sim} \int_{\Theta_1} k_1(X_{1i}, \theta) \, \tilde p_{1}(\dd\theta), \qquad  \text{for } i=1,2,\ldots
	\]
	\[
	X_{2i} \mid c_{1i} = m, (\tilde p_{21}, \ldots, \tilde p_{2M}) \overset{ind}{\sim} \int_{\Theta_2} k_2(X_{2i}, \theta)\, \tilde p_{2m}(\dd \theta), \qquad \text{for } i=1,2,\ldots
	\]
	with 
$	\tilde p_{1} \sim P_1$ and $(\tilde p_{21}, \ldots, \tilde p_{2M})  \sim P_2$, and
	where $k_1$ and $k_2$ are  kernels defined on $(\mathbb{X}_1,\Theta_1)$ and $(\mathbb{X}_2,\Theta_2)$, respectively;
	$\bm{c}_1 =( c_{11},\ldots, c_{1n})$ is a configuration of the allocation variables corresponding to the random partition $\rho_1$ induced by the marginal mixture model of $(X_{1i})_{i=1}^n$;
$M \in \mathbb{N}\cup\{+\infty\}$ is the number of mixture components in the marginal model of $(X_{1i})_{i=1}^n$;
 the prior $P_1$ is such that $\tilde p_{1}$ is an almost-surely discrete random probability measure;
the prior $P_2$ is such that $(\tilde p_{21}, \ldots, \tilde p_{2M})$ are almost-sure discrete (possibly dependent) exchangeable random probability measures. 
\end{defi}

A specific model is then obtained when the prior distributions $P_1$ and $P_2$ for $\tilde p_{1}$ and  $(\tilde p_{21}, \ldots, \tilde p_{2M})$, respectively, are chosen. 

The next theorem provides the hierarchical representation of the joint model for the data matrix. 

\begin{thm}[Telescopic clustering - joint representation] \label{thm:jointmix}
	If $(X_{1i},X_{2i})_{i\geq 1}$ follows a telescopic clustering model with two layers, as  in Definition~\ref{def:modelclass}, then, for $i=1,2,\ldots$, there exist $\theta_i$, $\xi_i$, and $\tilde p$, such that
	\begin{align*}
		(X_{1i}, X_{2i})\mid (\theta_i, \xi_i)\overset{ind}{\sim}k_1(X_{1i}, \theta_i)k_2(X_{2i}, \xi_i), \qquad	(\theta_i, \xi_i) \mid \tilde p \overset{iid}{\sim}\tilde p \overset{a.s.}{=} \sum_{m=1}^M \sum_{s=1}^S w_m q_{ms} \delta_{(\theta^{\star}_m,\,\xi^{\star}_s)}.
	\end{align*}
\end{thm}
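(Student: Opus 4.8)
The plan is to trace through the hierarchical construction of Definition~\ref{def:modelclass} and collapse the two species-sampling layers into a single almost-surely discrete random probability measure on the product space $\Theta_1 \times \Theta_2$. First I would invoke the species sampling representation at layer~1: since $\tilde p_1 \overset{a.s.}{=} \sum_{m=1}^M w_m \delta_{\theta^\star_m}$ with $(w_m)$ independent of the i.i.d. atoms $(\theta^\star_m)$, the allocation variables $\bm c_1$ satisfy $\P(c_{1i} = m \mid (w_m)) = w_m$, and conditionally on $c_{1i} = m$ we may set $\theta_i := \theta^\star_m$, so that $X_{1i} \mid \theta_i \sim k_1(X_{1i}, \theta_i)$ and $(\theta_i)$ are i.i.d. from $\tilde p_1$. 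This is the standard de~Finetti-type reduction for mixture models and requires no new argument.

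Next I would do the analogous reduction at layer~2, but now relative to the layer-1 labels. For each $m \in [M]$, the random probability $\tilde p_{2m}$ is almost surely discrete, so write $\tilde p_{2m} \overset{a.s.}{=} \sum_{s=1}^{S} q_{ms}\,\delta_{\xi^\star_{ms}}$ for some common (possibly infinite) index bound $S$; the key point is that because the $\tilde p_{2m}$ are exchangeable, their atom locations can be arranged to share a common list $(\xi^\star_s)_{s=1}^S$, with $\xi^\star_{ms} = \xi^\star_s$, while the weight matrix $(q_{ms})$ carries all the layer-specific (and cross-layer) dependence. Conditionally on $c_{1i} = m$ and on $(\tilde p_{2m})$, introduce a layer-2 allocation $c_{2i}$ with $\P(c_{2i} = s \mid c_{1i} = m, (\tilde p_{2m})) = q_{ms}$, and set $\xi_i := \xi^\star_s$; then $X_{2i} \mid \xi_i \sim k_2(X_{2i}, \xi_i)$, and combined with \eqref{eq:secondlayergen} this gives $X_{2i} \mid c_{1i}=m \sim \int k_2(X_{2i},\theta)\,\tilde p_{2m}(\dd\theta)$ as required. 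Because $X_{1i}$ depends on the data only through $\theta_i$ and $X_{2i}$ only through $\xi_i$, and because the kernels act on disjoint coordinates, the conditional factorization $(X_{1i},X_{2i}) \mid (\theta_i,\xi_i) \overset{ind}{\sim} k_1(X_{1i},\theta_i)k_2(X_{2i},\xi_i)$ follows immediately, using also the assumed conditional independence of the two layers given the partitions.

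Finally I would assemble the joint mixing measure. Define $\tilde p := \sum_{m=1}^M \sum_{s=1}^S w_m q_{ms}\,\delta_{(\theta^\star_m,\,\xi^\star_s)}$; it is almost surely discrete and, since $\sum_m w_m = 1$ a.s. and $\sum_s q_{ms} = 1$ a.s. for each $m$, its masses sum to one a.s. The construction above shows that the joint law of $(\theta_i,\xi_i)$, marginalizing out $\bm c_1$, $\bm c_2$, $(w_m)$, $(q_{ms})$, $(\theta^\star_m)$, $(\xi^\star_s)$, is exactly an i.i.d. sample from $\tilde p$: the pair $(c_{1i},c_{2i})$ takes value $(m,s)$ with probability $w_m q_{ms}$ given the weights, and $(\theta_i,\xi_i)$ is the deterministic readout $(\theta^\star_m,\xi^\star_s)$. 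Exchangeability of $(X_{1i},X_{2i})_{i\geq 1}$ guarantees this representation is consistent across $n$.

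The main obstacle is the bookkeeping in the second step: one must argue carefully that the exchangeable family $(\tilde p_{21},\dots,\tilde p_{2M})$ can be written with a shared atom support $(\xi^\star_s)_s$ so that a single double sum $\sum_{m,s} w_m q_{ms}\delta_{(\theta^\star_m,\xi^\star_s)}$ captures the model — in general distinct $\tilde p_{2m}$ have distinct atoms, and one handles this by taking $S$ large enough (countably infinite if needed) and allowing $q_{ms} = 0$, i.e. padding each $\tilde p_{2m}$ to the union of supports. One should also be mildly careful that introducing the auxiliary labels $c_{2i}$ does not alter the marginal law of $X_{2i}$, which is immediate since we are only disintegrating $\tilde p_{2m}$ along its own atoms. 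Everything else is the routine collapsing of a mixture model to its de~Finetti measure.
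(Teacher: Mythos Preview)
Your proposal is correct and follows essentially the same route as the paper: both arguments condition on the first-layer allocation $c_{1i}=m$, unfold the almost-sure discrete representation of $\tilde p_{2m}$, and read off the product kernel and the double-sum mixing measure. The paper's proof is terser (it packages your $c_{2i}$ step into a latent random measure $\tilde q_i \sim \sum_m w_m \delta_{\tilde p_{2m}}$), and you are more explicit than the paper about the shared-atom bookkeeping needed to write $\xi^\star_s$ without an $m$-index, but the underlying argument is the same.
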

 From the conditional construction of telescopic clustering models, it is trivial to prove that they satisfy CPE. In Theorem~\ref{thm:jointmix} we prove that the joint observations $(X_{1i}, X_{2i})_{i\geq1}$ are exchangeable across $i$ and that their law is \emph{Kolmogorov-consistent} in $n$. The latter condition is sometimes referred to as \emph{marginal invariance} \citep{dahl2017random} or \emph{projectivity} \citep{betancourt2022random} and, roughly speaking, implies that the model is suitable for drawing inferences on an infinite population since the distribution employed to model a finite sample $n$ is extendable.  
Here, \emph{Kolmogorov consistency} follows directly from the conditional-i.i.d. sampling of $(\theta_i, \xi_i)$ in Theorem~\ref{thm:jointmix} and de Finetti's theorem \citep{de1937prevision} for infinite exchangeable sequences of random variables.
 
Finally, if a global clustering structure (i.e., based on all layers) is of interest, the telescopic model still provides appropriate inference. Indeed, in telescopic clustering, global clusters are defined as the common refinement of the partitions at different layers, i.e., two subjects belong to the same global cluster if they belong to the same cluster at all layers. 
Still, the main goals of telescopic clustering models is different: (1) provide also, possibly different, clustering configuration at each layer, (2) allow global clusters to share all or a subset of latent parameters at any layer \citep[cfr.,][]{petrone2009hybrid}, (3) allow
more flexible transfer of information across features, which translates into better inferential performance (see Section~\ref{s:sim}),  (4) allow investigating dependence between features in terms of dissimilarities between clustering configurations at different layers. The latter point is more extensively described in the next section. 

\subsection{Measures of telescopic dependence}\label{ss:measuredep}
The class of models described above allows for a bi-variate clustering configuration of the same observational units taking into account within-subject dependence. In this section, four dependence measures between clustering configurations (at the different layers) are presented.  The measure of telescopic dependence and the telescopic adjusted Rand index are novel measures of dependence that capture specific properties of telescopic clustering models, while the remaining two are widely used measures: the expected Rand index and the expected Binder loss.

In telescopic clustering models, the probability of any two subjects being clustered together at layer 2 depends on whether they were clustered together at layer 1, with a higher probability if they were already clustered together than if they were not. This result follows directly from the fact that, in partially exchangeable mixture models with equal marginal distributions, the probability of ties within a group is always at least as high as the probability of ties across groups \citep[for details, see,][]{ascolani2024nonparametric,franzolini2022ondependent,franzolini2023multivariate}.
In light of this, we define a conditional measure of similarity between $\rho_1$ and $\rho_2$ as a normalized difference between conditional probabilities of ties.  
\begin{defi}[Measure of telescopic dependence]
	Given two random partitions $\rho_1$ and $\rho_2$ of the same subjects,
	\[
	\text{$\tau$} = \frac{\mathbb{P}[c_{2i} = c_{2j}  \mid c_{1i} = c_{1j}] - \mathbb{P}[c_{2i} = c_{2j} \mid c_{1i} \neq c_{1j}]}{\mathbb{P}[c_{2i} = c_{2j}  \mid c_{1i} = c_{1j}]}
	\]
	is called measure of telescopic dependence between $\rho_1$ and $\rho_2$.
\end{defi}
By definition, $\text{$\tau$} \in [0,1]$ and $\text{$\tau$} = 1$ if and only if$\mathbb{P}[c_{2i} = c_{2j} \mid c_{1i} \neq c_{1j}] = 0$, while $\text{$\tau$} = 0$ if and only if$\mathbb{P}[c_{2i} = c_{2j}  \mid c_{1i} = c_{1j}] = \mathbb{P}[c_{2i} = c_{2j} \mid c_{1i} \neq c_{1j}]$.
It is immediate to show that when $\rho_1$ and $\rho_2$ are independent, then $\text{$\tau$} = 0$.  On the other hand, under the enriched Dirichlet process $\text{$\tau$} = 1 $, indicating maximum telescopic dependence, while in our framework  $\text{$\tau$} \in [0,1] $. This is due to the fact that in telescopic clustering $\mathbb{P}[c_{2i} = c_{2j} \mid c_{1i} \neq c_{1j}]$ can be positive, while in the enriched Dirichlet process the same probability is equal to zero for any value of the hyperparameters, resulting in a smaller support for the joint prior of the partitions. 
The measure $\text{$\tau$}$ of telescopic dependence is an asymmetric measure, which is computed conditionally on the allocation at layer 1.

 Denote with $\Pi(n)$ the space of partitions of $n$ elements and with $p(\rho_1, \rho_2)$ the joint probability law of the two clustering configurations induced by a telescopic clustering model, which we name \emph{telescopic exchangeable partition probability function} (t-EPPF). In general, the t-EPPF has full support on the space of bi-variate clustering configurations $\Pi(n)^2$, while still encoding dependence between clustering configurations. 
In the following, we consider the expected Rand index (ER) and the expected Binder loss (EB) between $\rho_1$ and $\rho_2$, defined respectively as
$$ER = {n\choose 2} ^{-1}\int\limits_{\Pi(n)^2} [a(\rho_1, \rho_2) +  b(\rho_1, \rho_2)] \, \dd \, p(\rho_1, \rho_2)$$ 
$$EB = \int\limits_{\Pi(n)^2} [c(\rho_1, \rho_2) +  d(\rho_1, \rho_2)] \, \dd \, p(\rho_1, \rho_2) $$
where $a$, $b$, $c$, and $d$ are functions of the partitions: $a$ returns the number of pairs of observations clustered together both at layer 1 and 2, $b$ the number of pairs clustered together neither at layer 1 nor 2, $c$ the number of pairs clustered together at layer 1, but not at layer 2, and $d$ the number of pairs clustered together at layer 2 but not at layer 1. 

\begin{prp}[Dependence measures as functions of the number of clusters]\label{prop:exp_rand}
	In a telescopic clustering model, a priori  
	\begin{align*}
		\text{$\tau$} = \frac{\mathbb{P}(K_{22} = 1 \mid K_{12} = 1) - \mathbb{P}(K_{22} = 1 \mid K_{12} = 2) }{\mathbb{P}(K_{22} = 1 \mid K_{12} = 1)}
	\end{align*}
    \begin{align*}
		ER =\mathbb{P}(K_{12} = K_{22} ), \quad&\quad  EB = {n\choose 2}\mathbb{P}(K_{12} \neq K_{22} ) 
	\end{align*}
	where $K_{\ell n}$ denotes the number of clusters at layer $\ell$ in a sample of $n$ subjects. 
\end{prp}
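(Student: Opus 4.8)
The plan is to prove the three identities in Proposition~\ref{prop:exp_rand} by exploiting exchangeability of the subjects together with the structure of the t-EPPF, reducing each quantity to an event about the number of clusters in a sample of size two. The unifying idea is that for any fixed pair of subjects $i\neq j$, the marginal law of $(\rho_1,\rho_2)$ restricted to $\{i,j\}$ is, by exchangeability and Kolmogorov consistency (Theorem~\ref{thm:jointmix}), the same as the law of $(\rho_{12},\rho_{22})$, the two-layer partition of a sample of size $n=2$. A partition of two elements is entirely determined by whether the two elements are co-clustered, so $\{c_{1i}=c_{1j}\}=\{K_{12}=1\}$ and $\{c_{1i}\neq c_{1j}\}=\{K_{12}=2\}$, and similarly at layer 2. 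This dictionary immediately turns the conditional tie-probabilities defining $\tau$ into the stated ratio involving $\mathbb{P}(K_{22}=1\mid K_{12}=1)$ and $\mathbb{P}(K_{22}=1\mid K_{12}=2)$; the only thing to check is that conditioning on $\{c_{1i}\neq c_{1j}\}$ is the same as conditioning on $\{K_{12}=2\}$, which is clear since with two subjects there is no third element to complicate the layer-1 partition.

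For $ER$, I would start from the defining integral and use linearity of expectation to write $ER={n\choose2}^{-1}\,\mathbb{E}[a(\rho_1,\rho_2)+b(\rho_1,\rho_2)]$. Writing $a+b=\sum_{i<j}\big(\indic\{c_{1i}=c_{1j},c_{2i}=c_{2j}\}+\indic\{c_{1i}\neq c_{1j},c_{2i}\neq c_{2j}\}\big)$ and taking expectations term by term, exchangeability makes every summand equal, so $\mathbb{E}[a+b]={n\choose2}\,\mathbb{P}\big(\{c_{1i}=c_{1j},c_{2i}=c_{2j}\}\cup\{c_{1i}\neq c_{1j},c_{2i}\neq c_{2j}\}\big)$ for a fixed pair. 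The factor ${n\choose2}$ cancels, and under the size-two dictionary the event inside is exactly $\{K_{12}=K_{22}\}$ (both equal to $1$, or both equal to $2$). This gives $ER=\mathbb{P}(K_{12}=K_{22})$. The computation for $EB$ is the mirror image: $c+d=\sum_{i<j}\big(\indic\{c_{1i}=c_{1j},c_{2i}\neq c_{2j}\}+\indic\{c_{1i}\neq c_{1j},c_{2i}=c_{2j}\}\big)$, and the disjoint union of those two events is the complement of $\{K_{12}=K_{22}\}$, so $EB={n\choose2}\,\mathbb{P}(K_{12}\neq K_{22})$; note there is no ${n\choose2}^{-1}$ normalization in the definition of $EB$, which is why the binomial coefficient survives.

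The only genuinely delicate point is justifying that the pairwise marginal of $(\rho_1,\rho_2)$ in a sample of size $n$ coincides with the law of the size-two partitions $(\rho_{12},\rho_{22})$: this is where Kolmogorov consistency (marginal invariance) of the telescopic model, established in the discussion following Theorem~\ref{thm:jointmix}, is essential — without it, the number of clusters ``in a sample of $n$ subjects'' restricted to two of them would not have the clean two-element law, and the identities would not be dimension-free. Once that reduction is in place, everything else is bookkeeping with indicator functions and exchangeability; I do not anticipate any further obstacle, and I would keep the write-up short, spending most of the space on the $\tau$ case and stating that $ER$ and $EB$ follow by the same indicator-sum argument.
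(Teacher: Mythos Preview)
Your proposal is correct and follows essentially the same route as the paper's own proof: exchangeability reduces the pairwise tie probabilities to those for subjects $1$ and $2$, the indicator-sum expansion of $a+b$ (resp.\ $c+d$) plus linearity of expectation collapses $ER$ (resp.\ $EB$) to a single-pair probability, and the identification $\{c_{\ell1}=c_{\ell2}\}=\{K_{\ell2}=1\}$, $\{c_{\ell1}\neq c_{\ell2}\}=\{K_{\ell2}=2\}$ finishes. Your explicit invocation of Kolmogorov consistency to justify that the restricted partition in a sample of size $n$ has the same law as the size-two partition is a welcome clarification that the paper leaves implicit.
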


As noted by \cite{hubert1985comparing}, when the Rand index is used to compare random partitions, its expected value is not 0 in case of independence of the partitions. In a telescopic clustering, when $\rho_1$ and $\rho_2$ are independent, the expected value of the rand index is
$ER^{\perp} =\, \sum_{\kappa =1}^2\mathbb{P}(K_{12} = \kappa)\mathbb{P}(K_{22} = \kappa) $
where $\perp$  denotes independence (see Proposition~\ref{prop:exp_rand} above). Thus, $ER^{\perp}$ is typically positive. In the same spirit as that of the adjusted Rand index \citep{hubert1985comparing}, we define a \emph{telescopic adjusted rand-index} that allows us to correct for the randomness of the partitions. 
\begin{defi}[Telescopic adjusted Rand index]
	The telescopic adjusted Rand index between $\rho_1$ and $\rho_2$ is defined as 
	\[
	TARI = \frac{[a(\rho_1, \rho_2) +  b(\rho_1, \rho_2)] - ER^{\perp}}{1 - ER^{\perp}}
	\]
\end{defi}
It is trivial to prove that, in the case of independence, the a priori expected value of the $TARI$ equals 0.

\subsection{Extension to $L$ layers using polytrees}
\label{ss:Llayers}
\begin{figure}[H]
	\begin{center}
	\resizebox{0.85\textwidth}{!}{
		\begin{tikzpicture}
			\tikzstyle{main}=[circle, minimum size = 13mm, thick, draw =black!80, node distance = 10mm]
			\tikzstyle{connect}=[-latex, thick]
			\tikzstyle{box}=[rectangle, draw=black!100]
			\node[main] (O1) {$\rho_1$};
			\node[main] (O3) [ right=of O1] {$\rho_2$};
			\node[main] (O5) [ right=of O3] {$\rho_3$};
			\node[main] (O6) [ right=of O5, draw=white!100] {$\ldots$};
			\node[main] (O7) [ right=of O6] {$\rho_t$};
			\node[main] (O8) [ right=of O7, draw=white!100] {$\ldots$};
			\node[main] (O9) [ right=of O8] {$\rho_T$};
			\node[main] (O10) [ right=of O9, draw=white!100] {$\ldots$};
			%%%%%%%%%%%%%%%%%%%%%%%%%%%%%%%%%%%%%
			\path (O1) edge [connect] (O3);
			\path (O3) edge [connect] (O5);
			\path (O5) edge [connect] (O6);
			\path (O6) edge [connect] (O7);
			\path (O7) edge [connect] (O8);
			\path (O8) edge [connect] (O9);
			\path (O9) edge [connect] (O10);
	\end{tikzpicture}}
	\caption{\label{fig:longitudinalgraph} Layer dependence for longitudinal data.}
	\end{center}
\vspace{-0.5cm}
\end{figure}
The class of telescopic models as presented in the previous sections defines a prior distribution for the joint law of two partitions, $\rho_1$ and $\rho_2$, through the product
$p(\rho_1)\,p(\rho_2\mid \rho_1)$
where $p(\rho_1)$ and $p(\rho_2\mid \rho_1)$ are used to denote the marginal law of the partition $\rho_1$ and the conditional law of the partition $\rho_2$, respectively. 

The main advantage and novelty of this class of models lie in how the dependence between the two partitions is defined through the CPE, which ultimately specifies a one-way relationship from $\rho_1$ to $\rho_2$, denoted in the following as $\rho_1\rightarrow\rho_2$.

\begin{wrapfigure}{l}{0.23\textwidth}
	\resizebox{0.23\textwidth}{!}{
		\begin{tikzpicture}
			\tikzstyle{main}=[circle, minimum size = 13mm, thick, draw =black!80, node distance = 10mm]
			\tikzstyle{connect}=[-latex, thick]
			\tikzstyle{box}=[rectangle, draw=black!100]
			\node[main] (O1) {$\rho_X$};
			\node[main] (O3) [below right=of O1] {$\rho_Z$};
			\node[main] (O5) [below left=of O1] {$\rho_Y$};
			%%%%%%%%%%%%%%%%%%%%%%%%%%%%%%%%%%%%%
			\path (O1) edge [connect] (O3);
			\path (O1) edge [connect] (O5);
	\end{tikzpicture}}
	\caption{\label{fig:multi-view}  Triangular dependence for three layers.}
\end{wrapfigure}
A straightforward way to extend the modeling strategy to any number of layers is by combining multiple pairwise relationships in a polytree.
For instance, in the context of longitudinal data, where different measurements are collected at different time points a Markovian structure across different layers can be imposed. The resulting telescopic clustering model is then obtained assuming CPE between $\Pi_t$ and $\Pi_{t+1}$ for any $t \in \mathbb{N}$, i.e., 
$p(\rho_t,\,t=1,2,\ldots) = p(\rho_1)\prod_{t=2}^{\infty}p(\rho_t\mid \rho_{t-1})$.
See Figure~\ref{fig:longitudinalgraph}.

A second extension that we consider in this work involves combining the dependence across three sets of features through the triangular graph represented in Figure~\ref{fig:multi-view}. In this setting, given the clustering configuration of $X$, which is the response variable of main interest, the goal is to also infer additional clustering configurations for two other sets of variables:  $Y$ and $Z$. Then, the t-EPPF of the model is given by  
$p(\rho_X,\rho_Y,\rho_Z) = p(\rho_X)p(\rho_Y\mid \rho_X)p(\rho_Z\mid \rho_X)$. The polytrees strategy is based on a partial ordering of the different layers, due to the fact that each node in the graph can have at most one parent node and the multivariate dependence across layers is obtained by combining pairwise dependence only. Nonetheless, in the structure in Figure~\ref{fig:multi-view}, the CPE induces a mutual (undirected) dependence between $\rho_Y$ and $\rho_Z$ in the sense that, when $p(\rho_Y\mid \rho_X) = p(\rho_Z\mid \rho_X)$, the conditional law of $\rho_Z$ given $\rho_Y$ is the same as the conditional law of $\rho_Y$ given $\rho_Z$.

\FloatBarrier 
\section{A telescopic model with infinite number of labels}\label{ss:tHDP}
Hierarchical constructions for dependent processes, initially introduced in \cite{teh2006teh}, offer a powerful framework for modeling dependence across random distributions. In \cite{teh2006teh}, the construction is based on the Dirichlet process and it was further extended to encompass more general processes in \cite{camerlenghi2019distribution}, \cite{argiento2020hierarchical}, and \cite{bassetti2020hierarchical}. 
We employ this construction to build up the telescopic mixtures with hierarchical Dirichlet processes (t-HDP), where we set as prior for the first-layer random probability $\tilde p_{1}$ an HDP, which defines  the law of a single process \citep[for details and generalization of this prior, see][]{camerlenghi2018bayesian} such that
\begin{equation}
	\label{eq:HDPmarg}
	\begin{aligned}
		\tilde p_{1} \mid \gamma, \tilde p_0 \sim DP(\gamma, \tilde p_0),\qquad	\tilde p_0 \mid \gamma_0 \sim DP(\gamma_0, P_{\theta}),
	\end{aligned}
\end{equation} 
while the second-layer conditional law is 
\begin{equation}
	\begin{aligned}
		X_{2i} \mid \bm{c}_1, (\tilde p_{21}, \tilde p_{22} \ldots, ) &\overset{ind}{\sim} \int f(X_{2i},\theta) \tilde p_{2c_i}(\dd \theta)\\
		\tilde p_{2m} \mid  \alpha,  \tilde q_0 \overset{iid}{\sim} DP(\alpha, \tilde q_0), \quad&\quad 	\tilde q_0  \mid \alpha_0 \sim DP(\alpha_0, P_{\xi}),\\
	\end{aligned}
\end{equation}
where $DP(\alpha, P)$ denotes a Dirichlet process with concentration parameter $\alpha$ and base distribution $P$.
Consider a specific partition $\rho_1$ into $K_{1n}$ sets of numerosities $n_1,\ldots,n_{K_{1n}}$ for the first-layer partition. Then,  we have \citep[see,][]{camerlenghi2018bayesian}
\begin{equation}
	\label{eq:EPPFHDP}
	\mathbb{P}[\rho_1 = \rho] = \frac{\gamma_0 ^ {K_{1n}}}{ (\gamma)^{(n)}} \sum_{\bm{\ell}}\frac{\gamma^{|\bm{\ell}|}}{(\gamma_0)^{(|\bm{\ell}|)}} \prod_{m=1}^{K_{1n}} (\ell_m - 1)! |s(n_m, \ell_m)|
\end{equation}
where $|s(n, k)|$ denotes the signless Stirling number of the first kind and the sum in \eqref{eq:EPPFHDP} runs over all vectors $(l_1,\ldots, l_{K_{1n}})$ such that $l_m \in [n_m]$ and $(\gamma)^{(n)} = \Gamma(\gamma + n) / \Gamma(\gamma)$, where $\Gamma(x)$ denote the Gamma function in $x$.
The conditional law of the partition at layer 2, given $\rho_1$, is 
\begin{equation}
	\label{eq:pEPPFHDP}
	\mathbb{P}[\rho_2 = \rho\mid \rho_1] = \frac{\alpha_0^{K_{2n}}}{\prod_{m=1}^{K_{1n}}(\alpha)^{(n_m)}}
	\sum_{\bm{t}}\frac{\alpha^{|\bm{t}|}}{(\alpha_0)^{(|\bm{t}|)}} \prod_{s=1}^{K_{2n}} ( t_{\cdot s} - 1)! 
	\prod_{m=1 }^{K_{1n}}  |s(n_{ms}, t_{ms})|
\end{equation}
where the sum runs over all matrices $K_{1n} \times K_{2n}$, whose generic element $t_{ms}$ belong to $[n_{ms}]$ provided that $n_{ms} \geq 1$, and is equal to 1 when $n_{ms} = 0$.  Moreover, $t_{\cdot s} =\sum_{m}^{K_{1n}} t_{ms}$. See  \cite{camerlenghi2019distribution}.

\begin{thm}[t-EPPF in the t-HDP model]\label{thm:tEPPFinfinite}
	Given a telescopic mixture model with hierarchical Dirichlet processes and two layers, $p(\rho_1, \rho_2)$ is given by 
	\footnotesize
	\begin{align*}
		\frac{\gamma_0 ^ {K_{1n}} \alpha_0^{K_{2n}}}{ (\gamma)^{(n)}\prod\limits_{m=1}^{K_{1n}}(\alpha)^{(n_m)}} 
		\sum_{\bm{\ell},\bm{t}}\frac{\gamma^{|\bm{\ell}|}\alpha^{|\bm{t}|}}{(\gamma_0)^{(|\bm{\ell}|)} (\alpha_0)^{(|\bm{t}|)}} \left(\prod_{m=1}^{K_{1n}} (\ell_m - 1)! |s(n_m, \ell_j)|\right) \, \prod_{s=1}^{K_{2n}}  ( t_{\cdot s} - 1)! \prod_{m=1}^{K_{1n}}  |s(n_{ms}, t_{ms})| 
	\end{align*}
	\normalsize
\end{thm}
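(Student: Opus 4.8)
The plan is to derive the joint law $p(\rho_1,\rho_2)$ by factorizing it as $\mathbb{P}[\rho_1=\rho]\cdot\mathbb{P}[\rho_2=\rho\mid\rho_1]$, which is exactly the telescopic structure $p(\rho_1)\,p(\rho_2\mid\rho_1)$ that defines the model in Definition~\ref{def:modelclass}. Both factors are already available in closed form: the marginal EPPF of the HDP in \eqref{eq:EPPFHDP} and the conditional partially exchangeable partition probability function in \eqref{eq:pEPPFHDP}, the latter being a consequence of the second-layer hierarchical Dirichlet process prior applied independently within each first-layer cluster (as in \cite{camerlenghi2019distribution}). So at a high level the theorem is almost a bookkeeping exercise, but there are a couple of points that require care.

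First I would recall \eqref{eq:EPPFHDP}, writing $\mathbb{P}[\rho_1=\rho]$ in terms of $K_{1n}$, the block sizes $n_1,\ldots,n_{K_{1n}}$, the auxiliary vector $\bm{\ell}=(\ell_1,\ldots,\ell_{K_{1n}})$ with $\ell_m\in[n_m]$, and the signless Stirling numbers $|s(n_m,\ell_m)|$, with prefactor $\gamma_0^{K_{1n}}/(\gamma)^{(n)}$ and the inner summand $\gamma^{|\bm{\ell}|}/(\gamma_0)^{(|\bm{\ell}|)}\prod_m(\ell_m-1)!\,|s(n_m,\ell_m)|$. Then I would invoke \eqref{eq:pEPPFHDP} for $\mathbb{P}[\rho_2=\rho\mid\rho_1]$, noting that the cross-tabulation sizes $n_{ms}$ (number of subjects in first-layer block $m$ and second-layer block $s$) and the $K_{1n}\times K_{2n}$ auxiliary matrix $\bm{t}=(t_{ms})$, $t_{ms}\in[n_{ms}]$ when $n_{ms}\ge1$ and $t_{ms}=1$ when $n_{ms}=0$, together with $t_{\cdot s}=\sum_m t_{ms}$, are exactly the quantities appearing in the second factor: prefactor $\alpha_0^{K_{2n}}/\prod_m(\alpha)^{(n_m)}$ and inner summand $\alpha^{|\bm{t}|}/(\alpha_0)^{(|\bm{t}|)}\prod_s(t_{\cdot s}-1)!\prod_m|s(n_{ms},t_{ms})|$. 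Multiplying the two expressions, the prefactors combine into $\gamma_0^{K_{1n}}\alpha_0^{K_{2n}}/\big((\gamma)^{(n)}\prod_m(\alpha)^{(n_m)}\big)$, and since the $\bm{\ell}$-sum and the $\bm{t}$-sum range over disjoint sets of summation variables, the product of the two single sums equals the double sum $\sum_{\bm{\ell},\bm{t}}$ of the product of summands — giving precisely the displayed formula (the typographic ``$\ell_j$'' inside $|s(n_m,\ell_j)|$ in the statement should read $\ell_m$).

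The one genuinely substantive step — rather than pure algebra — is justifying that the joint law factorizes as the product of \eqref{eq:EPPFHDP} and \eqref{eq:pEPPFHDP} in the first place, i.e., that marginalizing out $\tilde p_1$ gives the HDP EPPF for $\rho_1$ and that, conditionally on $\rho_1$, the partition $\rho_2$ induced by the second-layer mixture with priors $\tilde p_{2m}\overset{iid}{\sim}DP(\alpha,\tilde q_0)$, $\tilde q_0\sim DP(\alpha_0,P_\xi)$ applied separately within each layer-1 block yields \eqref{eq:pEPPFHDP}. This follows because, conditionally on $\bm c_1$, the layer-2 observations split into $K_{1n}$ independent groups, each group being a Dirichlet process mixture sharing the common random base $\tilde q_0$, so the group-specific partitions are jointly distributed according to the partially exchangeable partition probability function of the HDP with the roles of sample and groups played by $(n_1,\ldots,n_{K_{1n}})$ — exactly the setting of \cite{camerlenghi2019distribution} that produces \eqref{eq:pEPPFHDP}. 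I expect this conditional-independence-across-layer-1-blocks argument, together with keeping the combinatorial indices ($n_{ms}$, $t_{ms}$, $t_{\cdot s}$, the convention $t_{ms}=1$ when $n_{ms}=0$) consistent between the two formulas, to be the only place where anything can go wrong; once that is in place, the theorem is immediate.
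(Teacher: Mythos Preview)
Your proposal is correct and follows exactly the paper's own approach: the paper's proof of this theorem consists of a single sentence stating that the result follows directly by combining equations \eqref{eq:EPPFHDP} and \eqref{eq:pEPPFHDP}, which is precisely the factorization $p(\rho_1,\rho_2)=p(\rho_1)\,p(\rho_2\mid\rho_1)$ you describe. If anything, your write-up is more detailed than the paper's, since you spell out the conditional-independence justification for \eqref{eq:pEPPFHDP} and flag the $\ell_j$/$\ell_m$ typo.
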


Starting from the expression of the t-EPPF, it is straightforward to compute the indexes of dependence introduced in Section~\ref{ss:measuredep}. 

\begin{cor}[Measures of dependence in the t-HDP model]
	\label{cor:D12_hdp}
	In a t-HDP, the measure $\text{$\tau$}$ of telescopic dependence is
	$\text{$\tau$} = \frac{\alpha_0}{\alpha_0 + \alpha + 1}$ and the expected Rand index is 
$ER = \frac{(1+\gamma_0+\gamma)(1 +\alpha_0 +\alpha) +  \gamma_0\, \alpha_0\,\gamma\, \alpha }{(\gamma_0+1)(\gamma+1)(\alpha_0 + 1)(\alpha+1)}.
$
\end{cor}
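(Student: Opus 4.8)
\textbf{Proof plan for Corollary~\ref{cor:D12_hdp}.}

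The plan is to derive both $\tau$ and $ER$ from the marginal-in-$n$ reductions of the t-EPPF in Theorem~\ref{thm:tEPPFinfinite}, using Proposition~\ref{prop:exp_rand}, which re-expresses both quantities purely in terms of the distributions of the numbers of clusters $K_{12}$ and $K_{22}$ (and conditional versions thereof) in the minimal sample sizes needed. Since $\tau$ depends only on $n=2$ subjects and $ER=\mathbb{P}(K_{12}=K_{22})$ also depends only on $n=2$, the whole computation reduces to evaluating the t-EPPF (and its $\rho_1$-marginal, the HDP EPPF in \eqref{eq:EPPFHDP}) on $n=2$, where the partition lattice has only two elements. First I would write out, for $n=2$, the four combinatorial quantities: $\mathbb{P}(K_{12}=1)$ and $\mathbb{P}(K_{12}=2)$ from \eqref{eq:EPPFHDP}, and the joint probabilities $p(\rho_1,\rho_2)$ for the $2\times 2 = 4$ combinations of one-block/two-block partitions at each layer from Theorem~\ref{thm:tEPPFinfinite}. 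For $n=2$ the Stirling numbers are all trivial ($|s(1,1)|=|s(2,2)|=1$, $|s(2,1)|=1$), the inner sums over $\bm\ell$ and $\bm t$ collapse to at most two terms, and the ascending factorials reduce to $(\gamma)^{(2)}=\gamma(\gamma+1)$, etc., so everything is an explicit rational function of $\gamma,\gamma_0,\alpha,\alpha_0$.

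For $\tau$: using Proposition~\ref{prop:exp_rand}, $\tau = 1 - \mathbb{P}(K_{22}=1\mid K_{12}=2)/\mathbb{P}(K_{22}=1\mid K_{12}=1)$. The numerator conditional probability is $p(\rho_1=\{\{1\},\{2\}\},\rho_2=\{\{1,2\}\})/\mathbb{P}(K_{12}=2)$, and the denominator is $p(\rho_1=\{\{1,2\}\},\rho_2=\{\{1,2\}\})/\mathbb{P}(K_{12}=1)$. When $\rho_1$ has two singleton blocks, the second layer's conditional law \eqref{eq:pEPPFHDP} is that of two independent DP($\alpha,\tilde q_0$) draws averaged over $\tilde q_0\sim$DP($\alpha_0,P_\xi$) — effectively an HDP with two groups of size one — whereas when $\rho_1$ is a single block, layer 2 sees a single group of size two from one DP($\alpha,\tilde q_0$). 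Computing the tie probability $\mathbb{P}(c_{21}=c_{22})$ in each case and simplifying should yield $\tau = \alpha_0/(\alpha_0+\alpha+1)$; the algebra is routine once the two conditional EPPFs are written down.

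For $ER = \mathbb{P}(K_{12}=K_{22})$ I would sum the t-EPPF over the two ``diagonal'' events $\{K_{12}=1,K_{22}=1\}$ and $\{K_{12}=2,K_{22}=2\}$. The first is just $p(\rho_1=\{\{1,2\}\},\rho_2=\{\{1,2\}\})$; the second requires $\rho_1$ and $\rho_2$ both fully disconnected, i.e. $p(\rho_1=\{\{1\},\{2\}\},\rho_2=\{\{1\},\{2\}\})$. Each is read off Theorem~\ref{thm:tEPPFinfinite} with $n=2$. Then I would put the two terms over the common denominator $(\gamma+1)(\gamma_0+1)(\alpha+1)(\alpha_0+1)$ — which is where the factors $(\gamma)^{(2)}$, $(\gamma_0)^{(|\bm\ell|)}$, $(\alpha)^{(n_m)}$, $(\alpha_0)^{(|\bm t|)}$ all cancel down to after using $|\bm\ell|,|\bm t|\in\{1,2\}$ — and verify the numerator collapses to $(1+\gamma_0+\gamma)(1+\alpha_0+\alpha) + \gamma_0\alpha_0\gamma\alpha$. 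The main obstacle is purely bookkeeping: keeping the index sets for $\bm\ell$ and the $K_{1n}\times K_{2n}$ matrix $\bm t$ straight in the two-block case (where $K_{1n}=2$ and the matrix is $2\times 2$ or $2\times 1$), and making sure the $n_{ms}=0$ convention ($t_{ms}=1$, $|s(0,1)|$ treated as $1$) is applied correctly so that the cross term $\gamma_0\alpha_0\gamma\alpha$ emerges with the right coefficient; there is no conceptual difficulty beyond careful evaluation of \eqref{eq:EPPFHDP}, \eqref{eq:pEPPFHDP} and Theorem~\ref{thm:tEPPFinfinite} at $n=2$.
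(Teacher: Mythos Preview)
Your approach is exactly the one the paper indicates: the paper's proof consists of the single line ``Corollary 1 follows directly from Theorem~\ref{thm:tEPPFinfinite} and Proposition~\ref{prop:exp_rand}'', and your plan is a fleshed-out version of precisely that, reducing to the $n=2$ evaluation of \eqref{eq:EPPFHDP} and \eqref{eq:pEPPFHDP} and reading off the two conditional tie probabilities. The only caveat is the bookkeeping you already flagged around the $n_{ms}=0$ convention in the $K_{12}=2$ case; once you compute $\mathbb{P}(K_{22}=1\mid K_{12}=2)=1/(1+\alpha_0)$ and $\mathbb{P}(K_{22}=1\mid K_{12}=1)=(1+\alpha+\alpha_0)/[(1+\alpha)(1+\alpha_0)]$ the formula for $\tau$ drops out immediately, and the $ER$ expression is then a straightforward common-denominator combination.
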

Thus, $\tau$ tends to $0$ as $\alpha$ tends to $\infty$ and to $1$ as $\alpha_0$ tends to $\infty$.
Finally, at layer 1, alternative priors might be more suitable depending on the application, such a non-hierarchical prior such as the classical Dirichlet Process \citep{ferguson1973bayesian}, or the Pitman-Yor process \citep[e.g.,][]{pitman1997two}. However, incorporating a hierarchical structure in the subsequent layers is essential for achieving conditional partial exchangeability.

\section{A telescopic model with random number of labels}\label{s:MFMmodel}
The t-HDP model introduced in the previous section assumes that the number of sub-populations (or components) in the mixtures equals infinity, which is a classical modelling assumption in Bayesian nonparametric mixtures models. Nonetheless, an alternative successful strategy consists in assuming that the number $M$ of sub-populations is almost-surely finite and placing a prior over $M$. The second telescopic model introduced here  lies within this framework.    
The prior for the first-layer random probability $\tilde p_{1}$ is defined by
\label{sec:sec}
\begin{equation}
	\label{eq:firstmix}
	\begin{aligned}
		&\tilde p_{1} = \sum\limits_{m=1}^{M} w_m \delta_{\theta^{\star}_m}\\ 
		\bm{w}=(w_1,\ldots,w_M) \mid M \sim P_w, \qquad &\theta^{\star}_m \mid M \overset{iid}{\sim} P_{\theta}, \quad \text{for } m=1,\ldots, M, \qquad &	M \sim P_M,
	\end{aligned}
\end{equation}
where $\bm{w}$ and $\bm{\theta^{\star}}=(\theta^{\star}_1,\ldots,\theta^{\star}_M)$ are independent and $P_M$ has support on the set of the natural numbers $\mathbb{N}$. The resulting marginal model for the first layer is a finite mixture with a random number of components \citep{nobile1994bayesian,miller2018mixture,argiento2019infinity}.
Depending  on the choice of $P_w$ different finite-dimensional prior processes can be employed as priors for the finite mixture construction. In the following, we focus on the  Dirichlet distribution as prior for the weights, as it is the most popular in applications, i.e.,
$	\bm{w} = \left(w_1,\ldots,w_M\right)\mid M, \gamma \sim \text{Dirichlet}_{M}\left(\gamma, \ldots, \gamma\right)$.
Then, the conditional law of the second layer is defined employing a novel construction for the mixing random probability measures, whose formal construction is detailed in the following definition.
\begin{defi}[Unique-atom process]\label{def:UA-process}
	A vector of random probabilities $(\tilde p_1, \ldots, \tilde p_K)$ is a unique-atom process if they admit the following almost-sure discrete representation:
$\tilde p_m \overset{a.s.}{=} (1-Z) \, \delta_{\xi^{\star}_m}+ Z\, \tilde p_0$ for $m=1,\ldots,K$, with	$Z\sim\text{Bernoulli}(\omega)$,
	where 
 $\tilde p_0$ is an almost-surely discrete random probability,
 $\xi^{\star}_m \overset{iid}{\sim} P_{\xi}$, for $m = 1,\ldots,K$, and
 $\tilde p_0$, $(\xi^{\star}_m)_{m=1}^K$, and $Z$ are pairwise independent.
\end{defi}
In the following, we make use of unique-atom processes where the common $\tilde p_0$ in the previous definition is a random probability with a random (almost-surely finite) number of support points and Dirichlet weights, i.e.,  
$\tilde p_0 \overset{a.s.}{=} \sum_{s=1}^S q_{s} \delta_{\xi^{\star}_{0s}}$
with $S\sim P_S$, weights $q_{s}$ distributed accordingly to a symmetric Dirichlet distribution and $\xi^{\star}_{0s}\overset{iid}{\sim}P_{\xi}$. 
The rationale behind the construction in Definition~\ref{def:UA-process} is the following: when the random variable $Z = 0$, the clustering structure is kept constant from one layer to the next, while when $Z=1$, the clustering structure is estimated independently from the clustering arrangement at the previous layer. 
Employing unique-atom processes to build up CPE needed for telescopic clustering, we get the following second-layer specification 
\begin{equation}
	\label{eq:secondmix}
	\begin{aligned}
		X_{2i} \mid \bm{c}_1, \bm{q}, \bm{\xi}, S, Z &\overset{ind}{\sim} (1-Z) k_2(X_{2i} ;\xi^{\star}_{c_{1i}})+Z\,\sum\limits_{s=1}^{S} q_s k_2(X_{2i} ;{\xi^{\star}_{0s}}) \\
		\bm{q} = (q_{1},\ldots,q_{S}) \mid S , \alpha & \sim\text{Dirichlet}_S(\alpha, \ldots, \alpha)\\ 
		{\xi^{\star}_{0s}} \mid S  \overset{iid}{\sim} P_{\xi}, \qquad \xi^{\star}_{m}\mid K_{1n}&\overset{iid}{\sim} P_{\xi}, \qquad S \sim P_M,\qquad Z \sim \text{Bernoulli}(\omega).
	\end{aligned}
\end{equation}

The joint law of the two partitions is provided by the next theorem.

\begin{thm}[t-EPPF in the telescopic unique atom process]\label{thm:mEPPFfinite}
	Given a telescopic mixture with unique atom processes, the t-EPPF is
	\begin{equation*}
		\begin{aligned}
			p(\rho_1, \rho_2) =   &(1-\omega) V(n,K_{1n}) \, \prod_{m = 1}^{K_{1n}} \frac{\Gamma(\gamma + n_{m})}{\Gamma(\gamma)} \mathbbm{1}(\rho_1=\rho_2)  \\
			+ &\omega \,V(n,K_{2n}) \, \prod_{s = 1}^{K_{2n}} \frac{\Gamma(\alpha + \sum_{m=1}^{K_{1n}}n_{ms})}{\Gamma(\alpha)} V(n,K_{1n}) \, \prod_{m = 1}^{K_{1n}} \frac{\Gamma(\gamma + n_{m})}{\Gamma(\gamma)}
		\end{aligned}
	\end{equation*}
    where 
    $V(n,K) = \sum\limits_{M=1}^{+\infty} \frac{M_{(K)}}{(\gamma K)^{(n)}} p_M(M)$.
\end{thm}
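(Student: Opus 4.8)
The plan is to compute $p(\rho_1,\rho_2) = p(\rho_1)\,p(\rho_2\mid\rho_1)$ by conditioning on the latent switch $Z$, exploiting the fact that the unique-atom construction collapses to one of two elementary regimes according to the value of $Z$. First I would recall that, under the finite-mixture-with-random-$M$ prior with symmetric Dirichlet weights at layer 1, marginalizing out $\bm{w}$, $\bm\theta^\star$ and $M$ yields the familiar EPPF of a Dirichlet-weighted mixture with a random number of components: $p(\rho_1) = V(n,K_{1n})\prod_{m=1}^{K_{1n}}\frac{\Gamma(\gamma+n_m)}{\Gamma(\gamma)}$, with $V(n,K)=\sum_{M\ge1}\frac{M_{(K)}}{(\gamma K)^{(n)}}p_M(M)$; this is the standard computation of \cite{miller2018mixture}/\cite{argiento2019infinity} and I would cite it rather than rederive it. This factor appears in both terms of the claimed formula, so the whole argument reduces to evaluating $p(\rho_2\mid\rho_1)$.

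Next I would split $p(\rho_2\mid\rho_1) = (1-\omega)\,p(\rho_2\mid\rho_1,Z=0) + \omega\,p(\rho_2\mid\rho_1,Z=1)$ using $Z\sim\text{Bernoulli}(\omega)$, independent of everything at layer 1. On the event $Z=0$, equation~\eqref{eq:secondmix} shows that $X_{2i}$ is drawn from $k_2(\cdot;\xi^\star_{c_{1i}})$, i.e. each layer-1 cluster carries a single, a.s.\ distinct atom $\xi^\star_m$ (distinct because $P_\xi$ is diffuse, or more carefully because the $\xi^\star_m$ are i.i.d.\ from a nonatomic measure so ties have probability zero); hence two subjects are co-clustered at layer 2 iff they are co-clustered at layer 1, giving $\rho_2=\rho_1$ deterministically and $p(\rho_2\mid\rho_1,Z=0)=\mathbbm{1}(\rho_2=\rho_1)$. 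On the event $Z=1$, equation~\eqref{eq:secondmix} shows $X_{2i}$ is drawn, independently of $\bm c_1$, from the mixture $\sum_{s=1}^S q_s k_2(\cdot;\xi^\star_{0s})$ with $\bm q\sim\text{Dirichlet}_S(\alpha,\dots,\alpha)$ and $S\sim P_M$; marginalizing $\bm q$, $\bm\xi^\star_0$ and $S$ is again exactly the random-$M$ Dirichlet-mixture EPPF computation, now over the $n$ layer-2 observations, yielding $p(\rho_2\mid\rho_1,Z=1)=V(n,K_{2n})\prod_{s=1}^{K_{2n}}\frac{\Gamma(\alpha+n_{\cdot s})}{\Gamma(\alpha)}$ where $n_{\cdot s}=\sum_{m=1}^{K_{1n}}n_{ms}$ is the total size of layer-2 cluster $s$. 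Substituting both regimes back, multiplying by $p(\rho_1)$, and noting that in the $Z=0$ term $\mathbbm{1}(\rho_2=\rho_1)$ forces $K_{2n}=K_{1n}$ and the sizes to coincide, gives precisely the two-term expression in the statement.

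The main obstacle — really the only non-bookkeeping point — is justifying the two reductions to the known random-$M$ EPPF cleanly: one must check that in the $Z=0$ branch the cluster-specific atoms $\xi^\star_m$ are almost surely distinct so that the induced layer-2 partition is genuinely $\rho_1$ and not some coarsening (this uses $P_\xi$ nonatomic, or the i.i.d.-from-nonatomic structure), and that in the $Z=1$ branch the layer-2 allocation is conditionally independent of $\rho_1$ given $(\bm q,\bm\xi^\star_0,S)$, so that the marginalization over $(\bm q,\bm\xi^\star_0,S)$ is unaffected by the layer-1 configuration and the relevant combinatorial multiplicities are exactly the $M_{(K_{2n})}$ and rising-factorial terms of the single-layer formula. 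Once these two observations are in place, the rest is a direct substitution and collection of terms; I would also remark that Kolmogorov consistency of the resulting law follows from Theorem~\ref{thm:jointmix}, so no separate coherence check is needed here.
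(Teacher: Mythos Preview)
Your proposal is correct and follows essentially the same route as the paper: factor $p(\rho_1,\rho_2)=p(\rho_1)\,p(\rho_2\mid\rho_1)$, quote the known Dirichlet-weighted random-$M$ EPPF for $p(\rho_1)$, and obtain $p(\rho_2\mid\rho_1)$ by conditioning on the Bernoulli switch $Z$, which yields the indicator $\mathbbm{1}(\rho_1=\rho_2)$ on $\{Z=0\}$ and another random-$M$ EPPF on $\{Z=1\}$. If anything, you are more explicit than the paper in spelling out why the $Z=0$ branch forces $\rho_2=\rho_1$ (via a.s.\ distinctness of the $\xi^\star_m$ under a diffuse $P_\xi$) and why the $Z=1$ branch is conditionally independent of $\rho_1$; the paper simply asserts the form of $p(\rho_2\mid\rho_1)$ ``from equation~(9)'' and multiplies.
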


\begin{cor}[Measures of dependence in the telescopic unique atom process]
	\label{cor:D12_CAWI}
	In a telescopic mixture with unique atom processes,
$\text{$\tau$} = \frac{1-\omega}{1 + \omega (\mathbb{E}[S]/\alpha - 1)}$ and 
$$ER = \frac{\mathbb{E}[M]}{\gamma}\left(1 - \omega + \omega \frac{\mathbb{E}[S]}{\alpha}\right) + 
	\frac{\mathbb{E}[M(M-1)]\gamma^2}{4\gamma^2+2\gamma}\left(1 - \omega + \omega \frac{\mathbb{E}[S(S-1)]}{4\alpha^2+2\alpha}\right). $$
\end{cor}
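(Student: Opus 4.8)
The plan is to derive both $\tau$ and $ER$ directly from the closed-form t-EPPF in Theorem~\ref{thm:mEPPFfinite}, using the characterizations in Proposition~\ref{prop:exp_rand} that express $\tau$ in terms of conditional probabilities $\mathbb{P}(K_{22}=1\mid K_{12}=\kappa)$ and $ER$ as $\mathbb{P}(K_{12}=K_{22})$. First I would observe that the t-EPPF is a two-component mixture: with probability weight $(1-\omega)$ the two partitions coincide ($\rho_1=\rho_2$, hence $\bm{c}_2$ is a deterministic copy of $\bm{c}_1$), and with weight $\omega$ the layer-2 partition is drawn from an \emph{independent} finite-mixture-with-random-$M$ prior (the Dirichlet/$P_M$ construction) with its own $V(n,K_{2n})$ and $\prod_s \Gamma(\alpha+\cdot)/\Gamma(\alpha)$ factors. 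This decomposition is the workhorse for everything that follows: any pairwise or cluster-count event factorizes as $(1-\omega)\cdot(\text{value under }\rho_2=\rho_1) + \omega\cdot(\text{value under }\rho_2\perp\rho_1)$.

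For $\tau$, I would compute the two conditional co-clustering probabilities. Using $\mathbb{P}[c_{2i}=c_{2j}\mid c_{1i}=c_{1j}] = (1-\omega)\cdot 1 + \omega\cdot\mathbb{P}[c_{2i}=c_{2j}]$ and $\mathbb{P}[c_{2i}=c_{2j}\mid c_{1i}\neq c_{1j}] = (1-\omega)\cdot 0 + \omega\cdot\mathbb{P}[c_{2i}=c_{2j}]$, where the unconditional layer-2 pairwise tie probability $\mathbb{P}[c_{2i}=c_{2j}]$ is the standard finite-Dirichlet-mixture quantity $\mathbb{E}\!\left[\tfrac{1}{S}\cdot\tfrac{\alpha+1}{\gamma' }\right]$-type expression — concretely, marginalizing the Dirichlet weights gives $\mathbb{E}[\sum_s q_s^2 \mid S] = (\alpha+1)/(S\alpha+1)$ and then $\mathbb{E}_S$ of a ratio; to get the clean stated form I expect one needs the identity $\mathbb{E}[1/S]$ together with the factor $(\alpha+1)/(\alpha)$ arising in the large-$n$ or exact ratio, yielding $\mathbb{P}[c_{2i}=c_{2j}] = \tfrac{\alpha}{\mathbb{E}[S]}\cdot\tfrac{1}{\alpha}\cdots$. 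Plugging into $\tau = 1 - \frac{\mathbb{P}[\text{tie}\mid \text{diff}]}{\mathbb{P}[\text{tie}\mid\text{same}]} = 1 - \frac{\omega\, r}{(1-\omega)+\omega\, r}$ with $r = \mathbb{P}[c_{2i}=c_{2j}]$ and simplifying should give $\tau = \frac{1-\omega}{1+\omega(\mathbb{E}[S]/\alpha - 1)}$, which identifies $r$ as $\alpha/\mathbb{E}[S]$ after the algebra collapses. I would double check this against the stated formula and, if needed, against the alternative route via Proposition~\ref{prop:exp_rand}'s $K_{22}\mid K_{12}$ characterization with $n=2$.

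For $ER = \mathbb{P}(K_{12}=K_{22})$, the same mixture split gives $ER = (1-\omega)\cdot 1 + \omega\cdot \mathbb{P}(K_{12}=K_{22})$ \emph{only} if one is careless — in fact when $\rho_2=\rho_1$ we have $K_{22}=K_{12}$ surely, so the first term contributes $(1-\omega)$, and the second term contributes $\omega\,\mathbb{P}^{\perp}(K_{12}=K_{22}) = \omega\sum_\kappa \mathbb{P}(K_{12}=\kappa)\mathbb{P}(K_{22}=\kappa)$. Rather than working with the full distribution of $K_{12}$, I would instead compute $ER$ via the pairwise expansion $ER = \binom{n}{2}^{-1}\mathbb{E}[a+b]$, writing $a+b = \binom{n}{2} - (\text{disagreements})$, or more cleanly via $ER = 1 - \binom{n}{2}^{-1}\mathbb{E}[c+d]$ where $c,d$ count the two kinds of pairwise disagreements; each of $\mathbb{E}[c],\mathbb{E}[d]$ is $\binom{n}{2}$ times a two-subject probability that again factors through the $(1-\omega)/\omega$ split and through the first- and second-layer pairwise tie probabilities. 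The first-layer pairwise tie probability under the finite-Dirichlet-$P_M$ prior is the known $\frac{\gamma+1}{\mathbb{E}[M]\gamma + 1}$-type expression; combining the layer-1 and layer-2 pieces and the independence cross-term should reproduce the two-summand formula, where the first summand $\frac{\mathbb{E}[M]}{\gamma}(1-\omega+\omega\mathbb{E}[S]/\alpha)$ comes from the "co-clustered at layer 1" contributions and the second summand with $\mathbb{E}[M(M-1)]$ and $\mathbb{E}[S(S-1)]$ comes from the "separated at layer 1" contributions. I anticipate the main obstacle is bookkeeping: matching the exact rational prefactors (the $4\gamma^2+2\gamma$, $4\alpha^2+2\alpha$ denominators and the $\mathbb{E}[M]/\gamma$ normalizations) requires care in how the $V(n,K)$ sums and the $\Gamma$-ratios simplify for $n=2$ and $n=3$ configurations, and in keeping the independence correction term $ER^\perp$ versus the conditional $ER$ straight — so I would verify each coefficient by specializing to a degenerate case ($\omega=0$ gives $ER=1$; $\omega=1$ should recover the independent-partitions $ER^\perp$ formula of Proposition~\ref{prop:exp_rand}).
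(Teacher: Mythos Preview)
Your proposal is correct and follows precisely the route the paper indicates: the paper's proof of this corollary is a single sentence stating that it ``follows directly from Theorem~\ref{thm:mEPPFfinite} and Proposition~\ref{prop:exp_rand},'' and your plan uses exactly those two ingredients --- the $(1-\omega)/\omega$ mixture decomposition of the t-EPPF together with the pairwise/cluster-count characterizations of $\tau$ and $ER$. You supply the intermediate computations (the conditional tie probabilities and the $K_{12}=K_{22}$ split) that the paper leaves entirely implicit; your planned sanity checks at $\omega=0$ and $\omega=1$ are prudent for pinning down the rational coefficients.
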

Thus, $\tau$ tends to $1$ as $\omega$ tends to 0 and to $0$ as $\omega$ tends to 1.

%%%%%%%%%%%%%%%%%%%%%%%%%%%%%%%%%%%%%%%%%%%%%%%%%%%%%%%%%%%%%%%%%%%%%%

\section{Algorithms for posterior inference}\label{s:algorithms}
Similarly to existing Bayesian mixture models, also in telescopic clustering models, posterior inference can be performed through either conditional or marginal Markov chain Monte Carlo (MCMC) algorithms. The conditional algorithms make use of representation theorems and also provide posterior samples of the underlying random probability measures \citep[see, for instance,][]{ishwaran2001gibbs, walker2007sampling}. Nonetheless, when the number of components is infinite, conditional algorithms typically require to rely on a truncated approximation of the underlying random probability measure. In contrast, the marginal algorithms are derived through marginalization of the random probability \citep[see, for instance,][]{neal2000markov}.

In the case of telescopic clustering models, marginal algorithms require evaluating the conditional law of the partition at the child nodes when sampling the cluster allocation at any given parent layer. However, evaluating this conditional law is typically computationally intensive, and introducing latent random variables to reduce the cost is not always straightforward. For example, in t-HDP models, the standard data augmentation provided by the Chinese restaurant franchise process \citep{teh2006teh} simplifies the conditional law of the partition to be evaluated but significantly slows down the mixing to unfeasible levels (for details, see Sections S3 and S4 of the Supplement). Therefore, enabling inference using marginal algorithms requires a tailored variable augmentation scheme for each telescopic clustering model. 

On the other hand, the conditional sampling scheme for the t-HDP model exhibits good mixing and significantly lower computational time per iteration, making posterior inference feasible and, importantly, easily adaptable to different poly-tree structures and prior choices (for details, see Sections S2, S3, and S6 of the Supplement).  

For these reasons, the results presented in the following sections are obtained via a truncated blocked Gibbs sampler. This algorithm is a conditional one, easier to generalize within the class of telescopic clustering models, provided that the full conditionals of the weights of the random probability measures are available. Unlike the marginal sampling scheme, it does not require model-specific data augmentation techniques, making it the preferred choice for this work. However, it relies on a truncated version of the random probability measures when the number of components is infinite, as in the t-HDP, and thus incurs a truncation error cost.
Thus, this approach incurs a truncation error cost. A promising direction for future research is the adaptation of such schemes using slice sampling techniques \citep{walker2007sampling,kalli2011slice}, as has been recently applied to the classical HDP in \citep{amini2019exact} and \cite{das2024blocked}. It is important to notice that the availability of conditional sampling schemes depends on the existence of (conditional) representation theorems and underlying random probabilities, which, thus, for telescopic clustering, are not only an analytical and probabilistic result but a fundamental computational tool. 
A detailed derivation of the sampling schemes, computational cost, and mixing performance for telescopic models are in Sections~S3, S4, and S7 of the Supplement.

%%%%%%%%%%%%%%%%%%%%%%%%%%%%%%%%%%%%%%%%%%%%%%%%%%%%%%%%%%%%%%%%%%%%%%
\section{Numerical studies}
\subsection{Simulation study}\label{s:sim}

\begin{table}[tb]
	\centering
	\resizebox{0.65\textwidth}{!}{
		\begin{tabular}{c||c|c|c|c||c|c|c|c}
			& \multicolumn{4}{c||}{Rand Index} & \multicolumn{4}{c}{\# Mistakes}\\
			Layer& k-means & t-HDP & LSBP & E-DP  & k-means & t-HDP & LSBP & E-DP \\
			\hline
			n.1&0.98&0.98&0.98&0.50&2&2&2&100\\
			n.2&0.98&1.00&0.98&0.90&2&0&2&10\\
			n.3&0.92&0.98&0.92&1.00&8&2&8&0\\
			n.4&0.98&1.00&0.98&0.92&2&0&2&17\\
			n.5&0.92&0.97&0.91&0.89&8&3&9&21\\
			n.6&0.97&0.98&0.97&0.86&3&2&3&31\\
			n.7&0.94&0.99&0.92&0.83&6&1&8&40\\
			n.8&0.95&1.00&0.95&0.79&5&0&5&44\\
			n.9&0.93&1.00&0.93&0.79&7&0&7&47\\
			n.10&0.91&0.99&0.89&0.75&9&1&11&54\\
			\hline
			average& 0.95& \textbf{0.99} & 0.83 & 0.82&5.2&\textbf{1.1}&5.7&36.4\\
	\end{tabular} }
	\caption{\label{tab:scenario3}Scenario 1, Rand indexes between the estimated and true clustering configurations and numbers of items allocated to the wrong cluster. } 
\end{table} 

Here we report results for a few simulations, additional simulation studies with different numbers of layers and misspecification can be found in Section S5 of the Supplement, together with additional results regarding the simulations described here below.

In the first simulation scenario (Scenario 1), we generate data on $n = 200$ items and $T = 10$ layers. At each layer, marginally we assume two clusters simulated from two univariate Normal distributions with unitary variance and centered in $0$ and $4$ respectively. From one layer to the next, 10 items (5\% of the total) are selected at random and moved to the other cluster.

\begin{wrapfigure}{l}{0.5\textwidth}
	\centering
	\includegraphics[width=\linewidth]{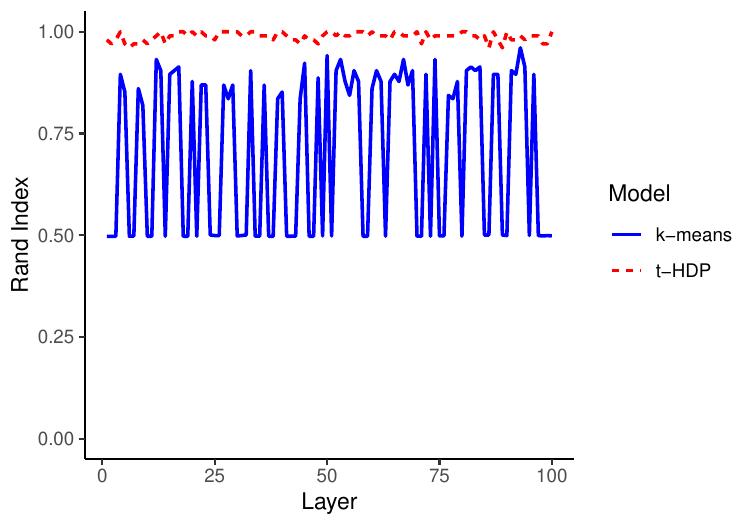}
	\caption{\label{sim4xx}Scenario 2. Rand indexes between the truth and the estimated configuration.}
	\label{fig:randsim5}
	\vspace{-\baselineskip}
\end{wrapfigure} We compare four methods: (i) k-means fitted independently at each layer, where the number of clusters is determined by the gap statistics \citep{tibshirani2001estimating}; (ii) the t-HDP's estimate; (iii) the estimate obtained with a logit stick-breaking process (LSBP) \citep{ren2011logistic}; and (iv) the estimate from an Enriched Dirichlet process (E-DP) \citep{wade2011enriched}.
For the LSBP, the layer's number is used as a covariate for both the weights and the atoms \citep[for more details and algorithms, see,][]{rigon2021tractable}. For models (ii)-(iv), we use a Gaussian kernel for the nonparametric mixture with a  Normal-InverseGamma for the mean and the variance as base measure. We report as a point estimate for the clustering configuration the one that minimizes the variation of information loss \citep{meila2007comparing}.
Table~\ref{tab:scenario3} summarises the results. 
The t-HDP model outperforms the competitors both consistently at each layer and overall. 
In Scenario 2, data for $T=100$ layers are simulated. At each layer, there are two clusters with 100 observations each. At layer 1,  data are sampled from	$X_{1i}\mid c_{1i}\overset{ind}{\sim}\mathcal{N}(0,1) \mathbbm{1}( c_{1i} = 1) + 
\mathcal{N}(3,1) \mathbbm{1}( c_{1i} = 2) $
Then, from layer $\ell$ to layer $\ell+1$, $2\%$ of the observations are selected at random and moved to the other cluster.
Figures~\ref{sim4xx} and \ref{fig4yy} summarize the results of the t-HDP model and independent k-means clustering, where again the t-HDP outperforms k-means. Posterior estimates are obtained by minimizing the variation of information loss \citep{meila2007comparing} and by employing the gap statistics \citep{tibshirani2001estimating}. 

\begin{figure}[H]
	\begin{subfigure}{.28\textwidth}
		\includegraphics[width=\linewidth]{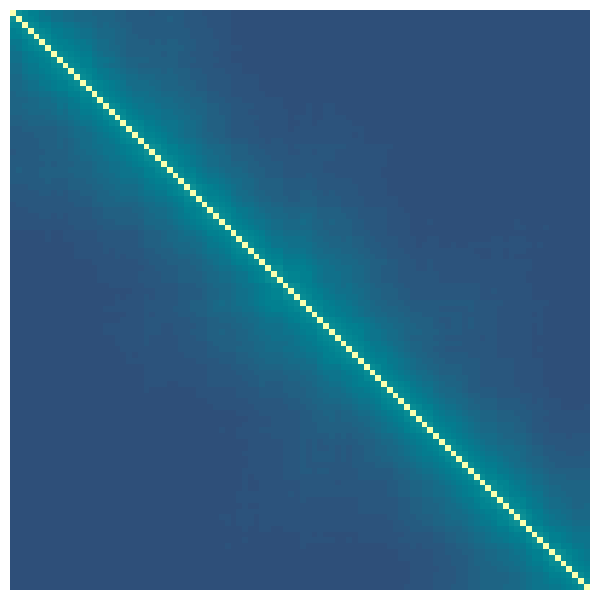}
		\caption{True dependence}
	\end{subfigure}
	\hfill
	\begin{subfigure}{.28\textwidth}
		\includegraphics[width=\linewidth]{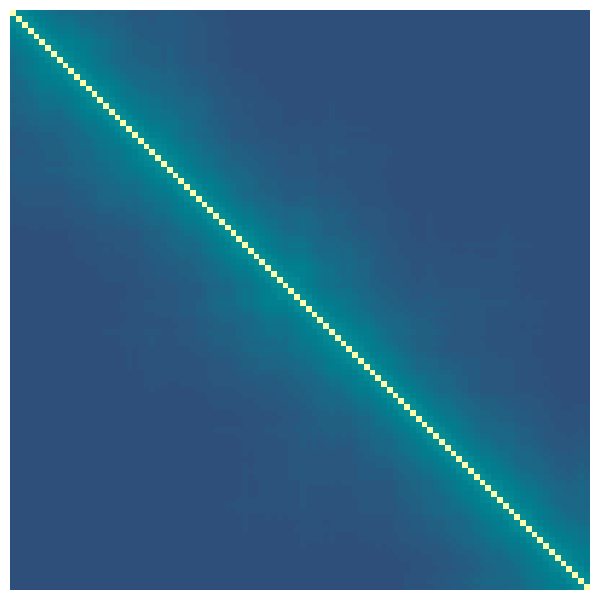}
		\caption{t-HDP estimate}
	\end{subfigure}
	\hfill
	\begin{subfigure}{.28\textwidth}
		\includegraphics[width=\linewidth]{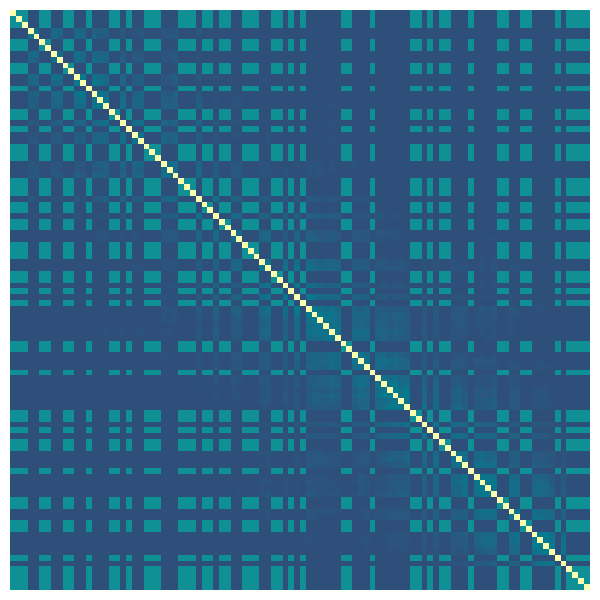}
		\caption{k-means estimate}
	\end{subfigure}
	\caption{\label{fig4yy} Simulation study: results for Scenario 2. Pairwise Rand indexes between any couple of layers for (a) the true clustering configurations; (b) the t-HDP model; (c) k-means. }
\end{figure}
\subsection{An application to childhood obesity}\label{s:metabolites}

\begin{figure}[htb!]
	\resizebox{\textwidth}{!}{
		\begin{tikzpicture}
			\tikzstyle{main}=[circle, minimum size = 13mm, thick, draw =black!80, node distance = 10mm, line width = 1 mm]
			\tikzstyle{connect}=[-latex, thick]
			\tikzstyle{box}=[rectangle, draw=black!100]
			\node[main] (1B) [minimum size = 10cm, fill=blue!20, align=center] {\HUGE{{28\%}} \\ \\ \includegraphics[width=\textwidth]{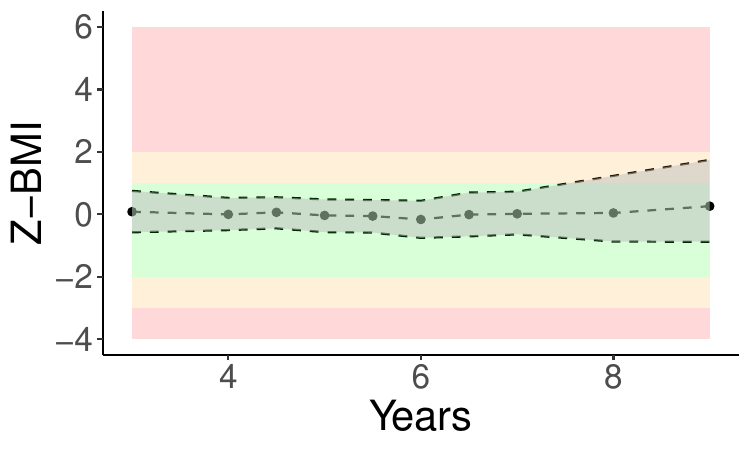}};
			\node[main] (1A) [left= 3 cm of 1B, minimum size = 10cm, fill=blue!20, align=center] {\HUGE{{26\%}} \\ \\ \includegraphics[width=\textwidth]{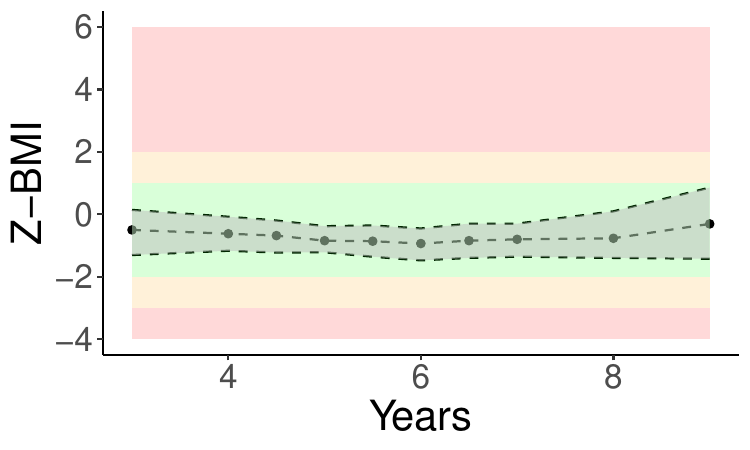}};
			\node[main] (1D) [left= 3 cm of 1A, minimum size = 10cm, fill=blue!20, align=center] {\HUGE{{10\%}} \\ \\ \includegraphics[width=\textwidth]{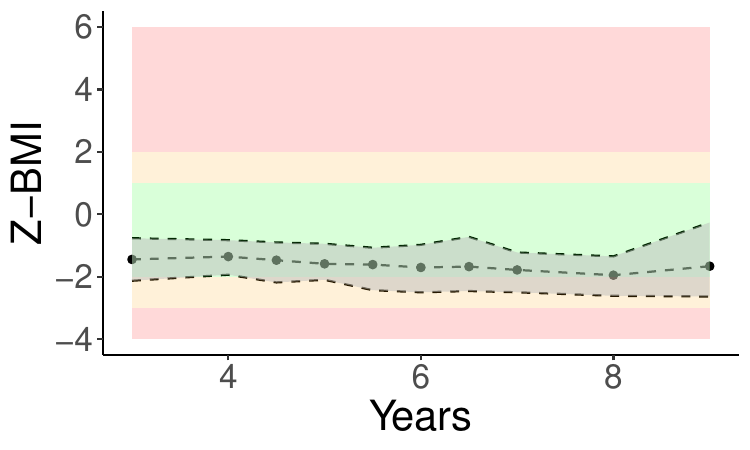}};
			\node[main] (1C) [right= 3 cm of 1B, minimum size = 10cm, fill=blue!20, align=center]{\HUGE{{22\%}} \\ \\ \includegraphics[width=\textwidth]{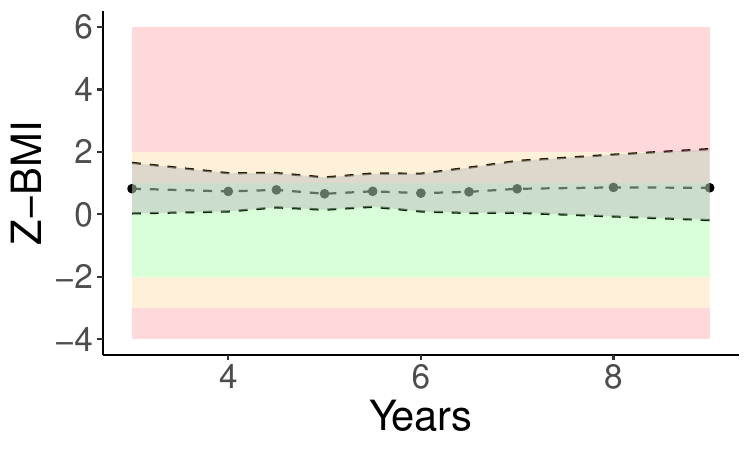}}; 
			\node[main] (1E) [right= 3 cm of 1C, minimum size = 10cm, fill=blue!20, align=center]{\HUGE{{14\%}} \\ \\ \includegraphics[width=\textwidth]{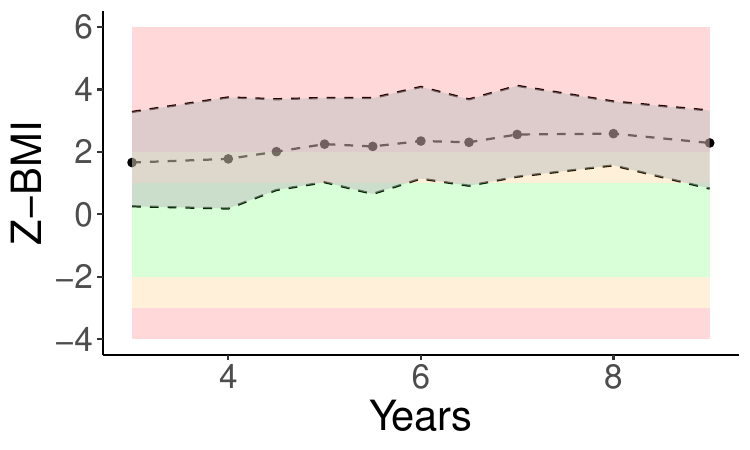}};
			\node[main] (2B) [above= 10 cm of 1B, minimum size = 10cm, fill=red!20, align=center] {\HUGE{{71\%}} \\ \\ \includegraphics[width=.7\textwidth]{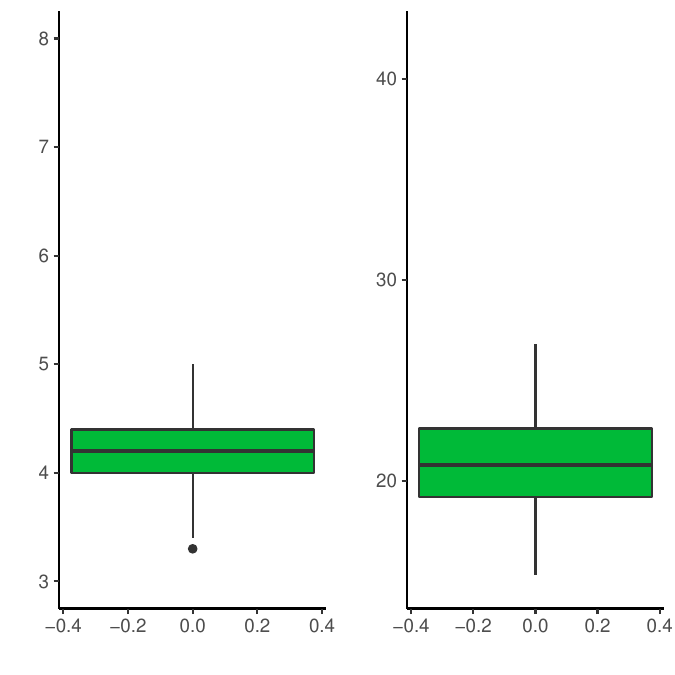}};
			\node[main] (2A) [right= 4 cm of 2B, minimum size = 10cm, fill=red!20, align=center] {\HUGE{{28\%}} \\ \\ \includegraphics[width=.7\textwidth]{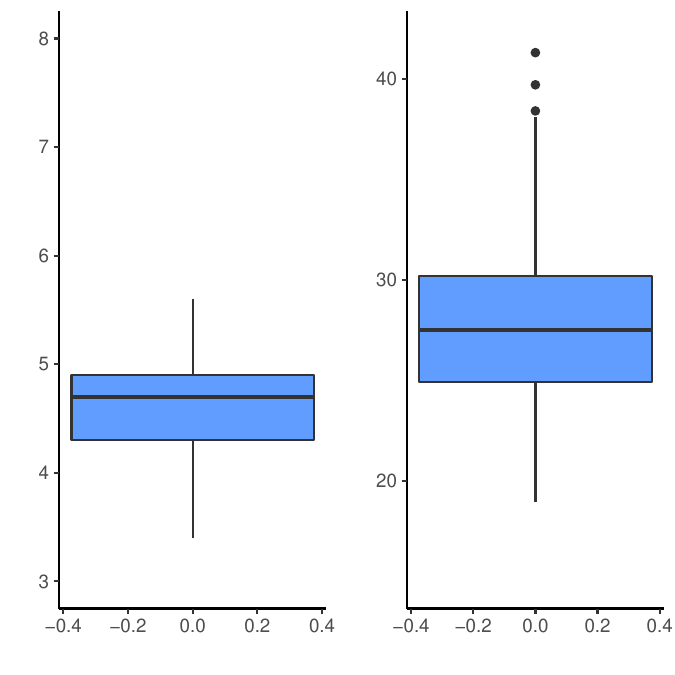}}; 
			\node[main] (2C) [left= 4 cm of 2B, minimum size = 10cm, fill=red!20, align=center] {\HUGE{{1\%}} \\ \\ \includegraphics[width=.7\textwidth]{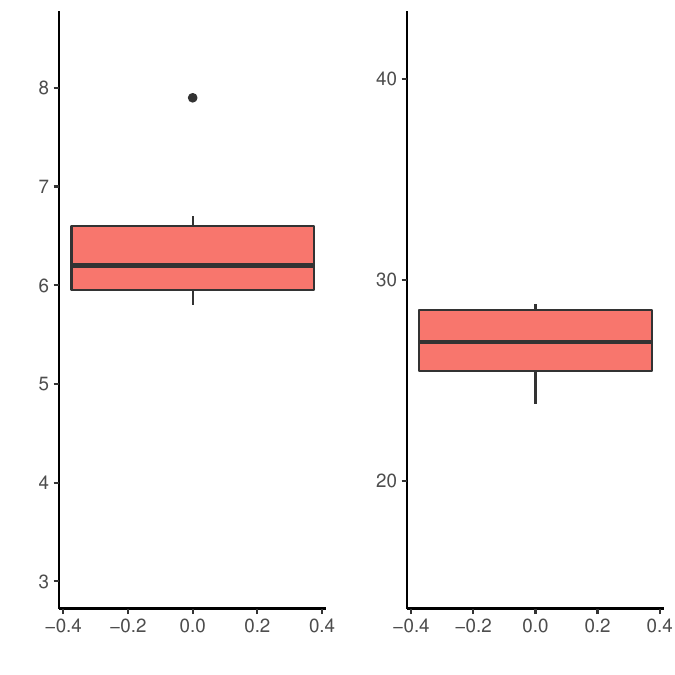}};
			\node[main] (3A) [below left =23cm of 1C, minimum size = 4.6cm, fill=green!20, align=center] {\HUGE{{89\%}} \\ \\ \includegraphics[width=.7\textwidth]{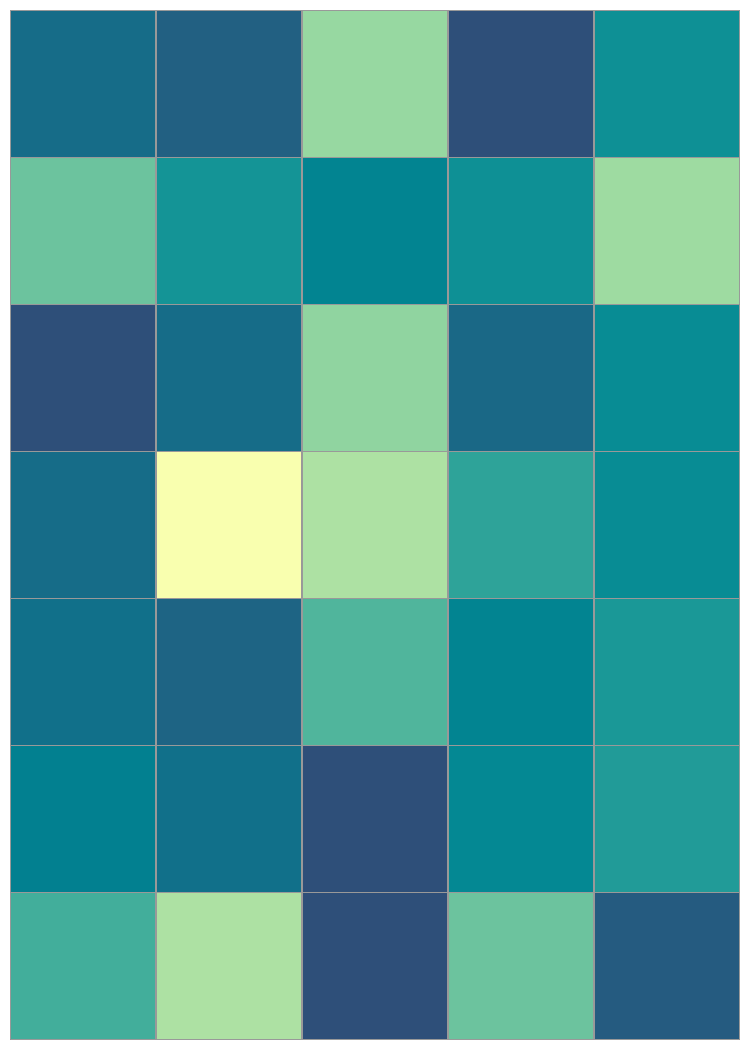}}; 
			\node[main] (3B) [right=of 3A, minimum size = 5.4cm, fill=green!20, align=center] {\HUGE{{10\%}} \\ \\ \includegraphics[width=.7\textwidth]{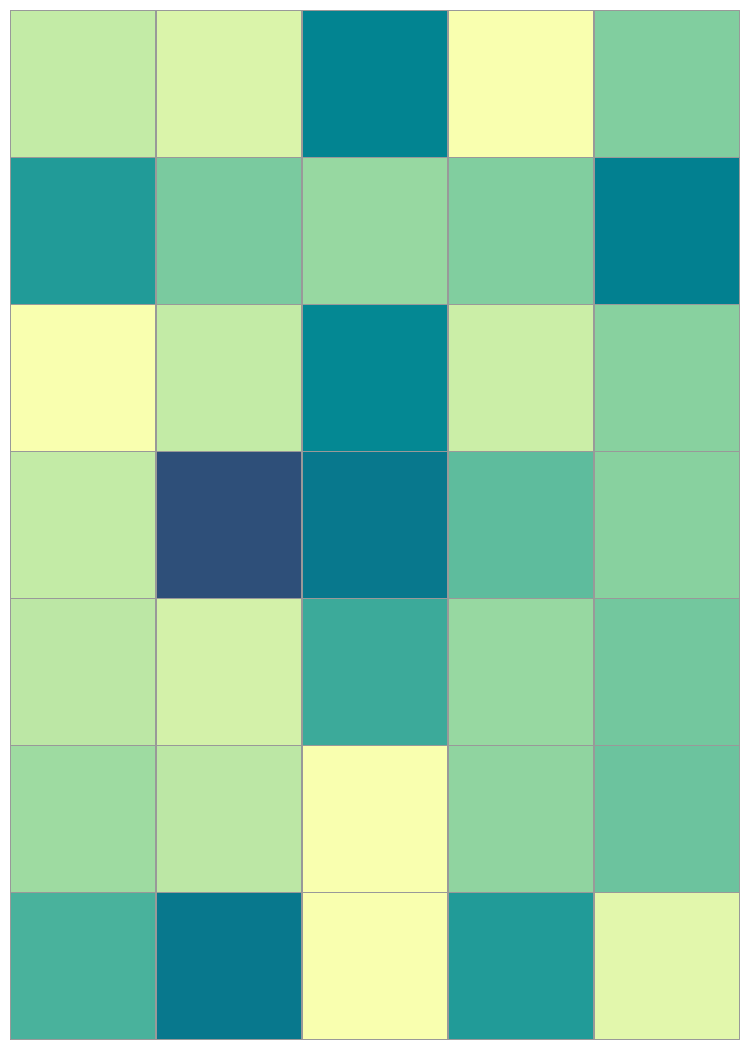}};
			\draw [->, line width = 5.6 mm] (1A) -> (2B) node[pos=0.5,above,font=\HUGE]
			{83\%\textcolor{white}{--------}};
			\draw [->, line width = 5.6 mm] (1B) -> (2B) node[pos=0.5,above,font=\HUGE]
			{74\%\textcolor{white}{--------}};
			\draw [->, line width = 5.6 mm] (1C) -> (2B) node[pos=0.5,above,font=\HUGE]
			{58\%\textcolor{white}{------------}};
			\draw [->, line width = 5.6 mm] (1E) -> (2B) node[pos=0.5,above,font=\HUGE]
			{49\%\textcolor{white}{----------------}};
			\draw [->, line width = 5.6 mm] (1E) -> (2A) node[pos=0.5,above,font=\HUGE]
			{49\%\textcolor{white}{------------}};
			\draw [->, line width = 5.6 mm] (1D) -> (2B) node[pos=0.5,above,font=\HUGE]
			{85\%\textcolor{white}{--------}};
			\draw [->, line width = 5.6 mm] (1A) -> (3A) node[pos=0.5,above,font=\HUGE]
			{90\%\textcolor{white}{----------}};
			\draw [->, line width = 5.6 mm] (1B) -> (3A) node[pos=0.5,above,font=\HUGE]
			{96\%\textcolor{white}{--------}};
			\draw [->, line width = 5.6 mm] (1C) -> (3A) node[pos=0.5,above,font=\HUGE]
			{91\%\textcolor{white}{--------}};
			\draw [->, line width = 5.6 mm] (1D) -> (3A) node[pos=0.5,above,font=\HUGE]
			{98\%\textcolor{white}{-------------}};
			\draw [->, line width = 5.6 mm] (1E) -> (3A) node[pos=0.5,above,font=\HUGE]
			{61\%\textcolor{white}{--------}};
		\end{tikzpicture}
	}
	\caption{\label{fig:graph-metabolites} Estimated clustering configuration for the GUSTO cohort data. Nodes in the graph represent different clusters and colors different layers. The percentage within the nodes denotes the amount of children assigned to that cluster.  Edges are drawn from each growth-trajectory cluster towards the mother cluster and metabolites cluster to which the majority of children in that particular cluster are assigned. }
\end{figure}

In this section, we investigate childhood obesity patterns and their relationship with metabolic pathways, as well as traditional clinical markers for mothers employing the t-HDP.
Data are available on  a sample of $n=553$ children from the  Growing Up in Singapore Towards healthy Outcomes (GUSTO) cohort
study, based in Singapore \citep{soh2014cohort}. 

The first layer of information consists of z-BMI trajectories, including ten unequally spaced measurements per child observed from ages 3 to 9. The second layer contains information on the mother's pre-pregnancy BMI (a known risk factor for childhood obesity) and the fasting oral glucose tolerance test (ogtt) result conducted at week 26 of pregnancy. 

In this third layer, we include concentration data of 35 metabolites measured in the children using NMR spectroscopy. Before applying the t-HDP model, we compute principal components of the metabolite data in the third layer, selecting the first six components based on the scree plot and the elbow method, which collectively explain 66\% of the variability. By clustering on the principal components, we focus on global patterns of the  $35$ metabolites, reducing noise and dimensionality, thus obtaining more robust and interpretable clusters. Data from the same cohort have been also analyzed by \cite{cremaschi2021joint} with the goal of identifying metabolic pathways related to childhood obesity. 

We fit the t-HDP model presented in Section~\ref{ss:tHDP} with multivariate independent Gaussian kernels and Normal-Inverse-Chi-Squared base measures for the vectors of means and variances. We specify a Gamma$(1,1)$ prior on all the concentration parameters. The total number of features is 18, divided into three layers of dimension 10, 2, and 6, respectively. The primary information is the growth trajectory of the child and conditionally on the clustering configuration of the trajectories, we define the model for the mother-layer and the metabolite-layer. We perform 100 000 iterations of the partially collapsed conditional block Gibbs sampler described in Section~S4.2 of the Supplement, discard the first half as burn-in, and apply a thinning of 5 so that the final posterior sample is 10,000 draws. The estimated clustering configurations are summarized in Table~S7.1 in the Supplement and shown in Figure~\ref{fig:graph-metabolites}. A detailed account of the results is provided in Section~S7 of the Supplement. 

The analysis identifies five distinct clusters that represent five different trajectories of z-BMI. The trajectories exhibit relatively stable patterns across the various time points considered but largely vary across clusters in terms of average z-BMI. More precisely, approximately 10\% of children show consistently low z-BMI values (\emph{underweight cluster}), around 14\% of children fell into the cluster characterized by overweight/obesity status (\emph{obesity cluster}), while 26\%, 28\%, and 22\% of children are associated to normal-weight trajectories which are, respectively, below average, equal to average and above average, indicating a healthier weight status as compared to the \emph{underweight cluster} and the \emph{obesity cluster}. 
At layer 2, mothers' clinical profiles are split into three distinct clusters. The first cluster contains a few outliers with exceptionally high glucose levels compared to the average in the sample. The remaining two clusters divide the mothers into two distinct groups. The first group, comprising 71\% of mothers, exhibits \emph{below-average} levels of glucose and BMI. In contrast, the second group, consisting of 28\% of mothers, is characterized by \emph{above-average} levels of both glucose and BMI.
The percentage of children associated with the \emph{below-average cluster} of mothers steadily decreases across the z-BMI clusters as the z-BMI trajectory increases. This finding suggests a positive relationship between the z-BMI trajectories of the child and the clinical markers of the mothers. Specifically, the majority of mothers in the \emph{above-average cluster} have children with an overweight growth trajectory. This association is confirmed in the medical literature \citep[see, for instance,][]{dalrymple2019relationships,josefson2020joint, landon2020relationship, meek2023unwelcome, ormindean2024obesity}. 
At the parallel layer 3, we estimate two distinct clusters characterized by different concentration profiles. The first cluster encompasses approximately 89\% of the children and the second cluster consists of 10\% of the children. 
Furthermore, the results indicate a relationship between obesity and metabolite concentrations. Specifically, conditioning on any of the \emph{normal-weight clusters} or on the \emph{underweight cluster} at layer 1, leads to a very similar distribution of the children across the two metabolite clusters. On the contrary,  conditioning to the \emph{obesity cluster} at layer 1, a drastic variation in the distribution of children across the metabolite clusters is observed. 
These results emphasize the role of metabolite profiles in obesity development, as it is also well documented in the medical literature \citep[see, for instance,][]{perng2014metabolomic,hellmuth2019individual,perng2020metabolomic,handakas2022systematic,schipper2024associations}. The observed associations between obesity trajectories and metabolite clusters provide further evidence of the complex interplay between metabolic factors and weight status. For a detailed account of the metabolite layer results, see Table S7.2 in the Supplement. 

\section{Conclusions and future directions}\label{s:conclusion}

Standard clustering techniques often struggle when applied to datasets collected under a repeated measures design, such as multi-view or longitudinal data. These scenarios require potentially different clustering configurations for each view while still preserving subjects' identities across them. In particular, classical model-based clustering techniques fail to effectively address this issue, as they either impose a single clustering configuration across all views or disregard subjects' identities across views, thereby failing to capture both the multi-view nature of the problem and the repeated measures design underlying the data. \\
To overcome this challenge, we introduce conditional partial exchangeability (CPE), an invariance requirement for the conditional law of the observables in one view, given the clustering configuration of the same units with respect to another view. When satisfied by a probabilistic clustering model, CPE induces dependencies across views while ensuring that subjects' identities are preserved, as formally established in Theorem 1. \\
Furthermore, we introduce, characterize, and apply telescopic clustering models, a novel class of Bayesian mixture models. This class of models highlights that the utility of CPE extends beyond maintaining subject identities in multi-view probabilistic clustering. Rather, its conditional formulation provides a constructive definition that facilitates the development and analysis of diverse clustering processes while ensuring both analytical and posterior computational tractability. We motivate our approach theoretically and conduct extensive comparisons with a range of existing methods, consistently demonstrating that our approach outperforms all competitors. \\
Finally, our framework paves the way for exciting and insightful advancements in the study and development of dependent random partition models. \\
From a theoretical and probabilistic point of view, we have demonstrated that CPE preserves subjects' identities (as formally established in Theorem 1) and shown that some existing Bayesian models preserving subject identities indeed satisfy CPE. However, an interesting open question remains: whether all dependent partition models that preserve subject identities must necessarily satisfy CPE. Establishing this result would allow us to conclude that CPE is not only a sufficient but also a necessary condition for incorporating repeated measures designs into partition models. \\
From a statistical and modeling perspective, further exploration of the t-HDP model (and the telescopic clustering class in general) in its Markovian dependence formulation would be valuable, particularly in identifying conditions for the stationarity of the partition chain's law. Similarly, exploring the properties and applications of a more general polytree-dependent structure, particularly examining the marginal distribution of partitions at the leaves where mutual dependencies can arise, would be highly valuable.

\newpage
\section*{A - Appendix}
\renewcommand{\thesection}{A\arabic{section}}
\setcounter{section}{0}

\section{Proof of Theorem~1}
\begin{proof}[Proof of Theorem~1]
By condition c-i) in the Definition~1 of Section 2 and de Finetti's representation theorem for partial exchangeability \citep{deFinetti1938} there exist $\tilde p_1, \ldots, \tilde p_{K_1}$ random probability measures with distribution $Q$ on $\mathcal{P}^{K_1}$, such that, conditionally on $\rho_1$, for $m$ and $m'$ in $[K_1]$, with  $m\neq m'$, we have
\[
	\mathbb{P}( (X_{2i}, X_{2j})\in A^2 \mid c_{1i} = c_{1j} = m, c_{1k} = m' ) = \int_{\mathcal{P}^{K_1}} \tilde p^2_{m}(A) d Q(\tilde p_1, \ldots, \tilde p_{K_1})
\]
and 
\[
\mathbb{P}( (X_{2i}, X_{2k})\in A^2 \mid c_{1i} = c_{1j} = m, c_{1k} = m' ) = \int_{\mathcal{P}^{K_1}} \tilde p_{m}(A) \tilde p_{m'}(A)d Q(\tilde p_1, \ldots, \tilde p_{K_1})
\]
where, by condition c-ii) in Definition 1, for any $A \in \mathbb{X}_2$, 
$$\mathbb{E}[\tilde p^2_{m}(A)] \geq \mathbb{E}[\tilde p_{m}(A) \tilde p_{m'}(A)].$$
Moreover note that, in general, being $\tilde p_1, \ldots, \tilde p_{K_1}$ dependent,
\[
\mathbb{P}( (X_{2i}, X_{2j})\in A\times B \mid c_{1i}= m, c_{1j} = m' ) = \mathbb{E}[\tilde p_{m}(A)\tilde p_{m'}(B)] \neq \mathbb{E}[\tilde p_{m}(A)] \mathbb{E}[\tilde p_{m'}(B)].
\]
\end{proof}

\section{Proof of Proposition~1}
Before proving Proposition~1, we first introduce the following Lemma.

\begin{lem}
	Given a (non-random) partition $\rho$ of $n$ elements, a vector $\bm{\gamma}=(\gamma_1,\ldots,\gamma_n)$ with binary entries, and a permutation $\sigma:[n]\rightarrow[n]$ of $n$ elements, let 
	\begin{itemize}
		\item $\sigma(\rho)$ be the partition obtained swapping the elements in the sets of $\rho$ accordingly to $\sigma$,
		\item $\mathcal{R}(\bm{\gamma})= \{ i : \gamma_{i} = 1\}$ and $\sigma(\bm{\gamma})=(\gamma_{\sigma(1)},\ldots,\gamma_{\sigma(n)})$
		\item $\rho^{\mathcal{R}(\bm{\gamma})}$ be the ``reduced partition" obtained removing from the sets in $\rho$ all elements that are not in $\mathcal{R}(\bm{\gamma})$ and then removing empty sets.
	\end{itemize}
	then 
	\begin{enumerate}
		\item $\rho^{\mathcal{R}(\bm{\gamma})} = \rho^{\mathcal{R}(\sigma(\bm{\gamma}))} $
		\item $\sigma^{-1}\left(\sigma\left(\rho^{\mathcal{R}^{(\bm{\gamma})}}\right) \right)= \rho^{\mathcal{R}^{(\bm{\gamma})}} $
		\item $\rho^{\mathcal{R}(\bm{\gamma})} = \sigma\left(\rho^{\mathcal{R}(\bm{\gamma})}\right) \text{ for any } \bm{\gamma}\in\{0,1\}^n \quad \text{ iff } \quad\sigma \in \mathcal{P}(n; \rho)$
	\end{enumerate}
	where $\sigma^{-1}$ denotes the inverse of $\sigma$, i.e., $\sigma^{-1}(i) = j$, for $j$ such that $\sigma(j) = i$ and
	$\mathcal{P}(n; \rho)$ denotes the space of permutations of $n$ elements that preserve $\rho$, cf. Definition~1 in Section~2.2. 
\end{lem}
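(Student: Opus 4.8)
The lemma is a bookkeeping statement about how three operations interact: restriction of a partition to a subset of its ground set, the index-permutation action $\bm{\gamma}\mapsto\sigma(\bm{\gamma})$ on binary vectors, and the relabelling action $\rho\mapsto\sigma(\rho)$ on partitions. I would begin by fixing one clean description of each and assembling a small dictionary. Identify a partition with its block structure (equivalently with the relation $i\sim j$); note that $\sigma(\rho)$ has blocks $\{\sigma(B):B\in\rho\}$, so that $i\sim_{\sigma(\rho)}j$ if and only if $\sigma^{-1}(i)\sim_\rho\sigma^{-1}(j)$; observe the combinatorial identity $\mathcal{R}(\sigma(\bm{\gamma}))=\sigma^{-1}(\mathcal{R}(\bm{\gamma}))$, which is immediate from $\sigma(\bm{\gamma})=(\gamma_{\sigma(1)},\ldots,\gamma_{\sigma(n)})$ and the definition of $\mathcal{R}$; and record that $\rho\mapsto\sigma(\rho)$ is a genuine left action of the symmetric group which commutes with restriction, in the precise sense that restricting $\sigma(\rho)$ to a set $S$ equals the $\sigma$-image of the restriction of $\rho$ to $\sigma^{-1}(S)$. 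Each of these is a one-line check on blocks.

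Parts~1 and~2 then follow by direct substitution from this dictionary. For Part~1 I would replace $\mathcal{R}(\sigma(\bm{\gamma}))$ by $\sigma^{-1}(\mathcal{R}(\bm{\gamma}))$ and track a generic retained pair of indices through both reduced partitions, after which the claim no longer involves $\sigma$ acting nontrivially. Part~2 is essentially formal: the assertion is that applying $\sigma$ and then $\sigma^{-1}$ to the ground set of a partition is the identity, which follows from the left-action property, once one has checked that the action is well defined on partitions of arbitrary subsets of $[n]$ so that the intermediate object $\sigma(\rho^{\mathcal{R}(\bm{\gamma})})$ is legitimate.

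Part~3, the characterisation of the partition-preserving permutations, is where the real content lies, and it is the step I expect to be the main obstacle. For the ``if'' direction I would assume $\sigma\in\mathcal{P}(n;\rho)$ --- equivalently $c_{\sigma(i)}=c_i$ for every $i$, equivalently $\sigma(B)=B$ for every block $B$ of $\rho$ --- and push this through the block description to conclude that the reduced partition is left invariant for every $\bm{\gamma}$. For the ``only if'' direction I would argue by contraposition: if $\sigma\notin\mathcal{P}(n;\rho)$ there is an index $i$ whose image $\sigma(i)$ lies in a different block of $\rho$ from $i$, and I would construct an explicit witness $\bm{\gamma}$ --- the indicator of a small, carefully chosen set containing $i$, one further index from its $\rho$-block, and, if needed, one index from the block of $\sigma(i)$ --- for which the two sides of the claimed identity genuinely disagree. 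The delicate points are (i) choosing the witness set so that the discrepancy survives the canonical relabelling of the retained indices rather than being hidden by it, and (ii) keeping track of which ground set each reduced partition lives on once $\sigma$ has been applied; getting this bookkeeping right is the crux, while everything else is routine unwinding of the definitions set up in the first step.
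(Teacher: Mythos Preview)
Your plan is sound, but for Part~3 you are working much harder than necessary on the ``only if'' direction. The paper's proof of this direction is a single observation: take $\bm{\gamma}=(1,\ldots,1)$, so that $\mathcal{R}(\bm{\gamma})=[n]$ and $\rho^{\mathcal{R}(\bm{\gamma})}=\rho$; the hypothesis then reads $\rho=\sigma(\rho)$, which is exactly the defining condition for $\sigma\in\mathcal{P}(n;\rho)$. Your contraposition strategy---finding an index $i$ with $\sigma(i)$ in a different block and then handcrafting a small witness set---would also succeed, but all the delicate bookkeeping you flag (which ground set the reduced partition lives on, whether the discrepancy survives relabelling) evaporates once you notice that the all-ones vector is already the universal witness. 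What your approach buys is perhaps a finer-grained understanding of how a non-preserving $\sigma$ fails, but for the lemma as stated it is overkill.

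For Parts~1 and~2 and for the ``if'' half of Part~3, your unwinding-of-definitions programme is essentially what the paper does, only spelled out; the paper's entire proof is three sentences invoking the relevant definitions, whereas your dictionary (the identity $\mathcal{R}(\sigma(\bm{\gamma}))=\sigma^{-1}(\mathcal{R}(\bm{\gamma}))$, the left-action property, the commutation of restriction with relabelling) makes explicit the mechanics the paper leaves implicit.
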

\begin{proof}[Proof of Lemma 1]
	The first statement follows trivially by definition of $\rho^{\mathcal{R}(\bm{\gamma})}$. The second statement follows by the definition of $\sigma^{-1}$ inverse of $\sigma$.
	The last statement follows by considering $\gamma=(1,\ldots,1)$ and the definition of $\mathcal{P}(n; \rho)$. 
\end{proof}

\begin{proof}[Proof of Proposition 1]
	Denoting with $X_{ti}$ a response measured on the $i$th unit at time $t$, for $i=1,\ldots,n$ and $t=1,\ldots,T$, the t-RPM mixture model of \cite{page2022dependent} is defined as 
	\begin{align*}
		X_{ti}\mid \bm{\theta}^{\star}_t, \bm{c}_{t} &\overset{iid}{\sim} k(X_{ti}, \theta^{\star}_{tc_{ti}})\qquad &\text{for } i=1,\ldots,n \text{ and }t=1,\ldots,T\\
		\theta^{\star}_{tj}\mid \mu_t&\overset{ind}{\sim} P_{\mu_t} \qquad &\text{for } j=1,\ldots,K_{t} \text{ and }t=1,\ldots,T\\
		\{\bm{c}_t,\ldots,\bm{c}_T\}\mid \bm{\alpha} &\sim \text{tRPM}(\bm{\alpha}, n)
	\end{align*}
	where $ \bm{\theta}^{\star}_t = (\theta^{\star}_{t1},\ldots, \theta^{\star}_{tK_t})$, $K_t$ is the number of clusters at time $t$, $k$ denotes a kernel, $ P_{\mu_t}$ is an absolutely continuous distribution, $\bm{c}_{t} = (c_{t1},\ldots,c_{tn})$ is the vector of allocation variables encoding the clustering configuration at time $t$, and $\bm{\alpha}=(\alpha_1,\ldots,\alpha_T)\in[0,1]^T$. 
	For the formal and detailed definition of 
	\[
	\{\bm{c}_t,\ldots,\bm{c}_T\}\mid \bm{\alpha} \sim \text{tRPM}(\bm{\alpha}, n)
	\]
	we refer to the paper of \cite{page2022dependent}, even though in the following we describe the core of the construction.
	
	Denoting with $\rho_{t-1}$ the partition encoded by $\bm{c}_{t-1}$, to prove CPE, we need to prove that 
	\[
	p(X_{t1},\ldots,X_{tn}\mid \rho_{t-1}) =
	p(X_{t\sigma(1)},\ldots,X_{t\sigma(n)}\mid \rho_{t-1})
	\]
	for any $\sigma \in \mathcal{P}(n; \rho_{t-1})$, where, we recall, that
	$\mathcal{P}(n; \rho_{t-1})$ denotes the space of permutations of $n$ elements that preserve $\rho_{t-1}$, see Section~2. 
	
	Given a partition $\rho$, we denote with $\sigma(\rho)$ the partition obtained by swapping the elements in the sets of $\rho$ accordingly to the permutation $\sigma$.
	In the t-RPM mixture, the conditional law of $(X_{t\,i})_{i=1}^n$ conditionally on the partition at the previous time point $\rho_{t-1}$, is defined such that
	\[
	p(X_{t1},\ldots,X_{tn}\mid \rho_{t-1})=\sum_{\lambda}p(X_{t1},\ldots,X_{tn}\mid \rho_{t} = \lambda)\enskip\mathbb{P}(\rho_{t} = \lambda\mid \rho_{t-1})
	\]
	where, the sum runs over all partitions $\lambda$ of $n$ elements. Each summand in the sum above is given by the product of two factors. For the first factor, we have trivially that:
	\[
	p(X_{t1},\ldots,X_{tn}\mid \rho_{t} = \lambda) = p(X_{t\sigma(1)},\ldots,X_{t\sigma(n)}\mid \rho_t = \sigma(\lambda))
	\]
	for any permutation $\sigma$ of $n$ elements.
	For what concerns the second factor, the conditional distribution $\mathbb{P}[\rho_{t} = \lambda\mid \rho_{t-1}]$ is defined by the introduction of the binary latent variables in $\gamma_{t} = (\gamma_{1t},\ldots,\gamma_{nt})$. The latent variables identify which subjects at time $t-1$
	will be considered for possible cluster reallocation at time $t$. Specifically, let $\gamma_{it}$
	be defined as
	\[
	\gamma_{it} =
	\begin{cases}
		1&\text{if unit $i$ is not reallocated when moving from time } t-1 \text{ to } t\\
		0& \text{otherwise}
	\end{cases}
	\]
	so that
	\[
	\mathbb{P}[\rho_{t} = \lambda\mid \rho_{t-1}] = \sum_{\gamma_{t}} \mathbb{P}[\rho_{t} = \lambda\mid \gamma_{t}, \rho_{t-1}]\enskip p(\gamma_{t})
	\]
	where the sum runs over all binary vectors of length $n$ and $p(\gamma_{t}) = \alpha_t^{\sum_{i=1}^n \gamma_{ti}}$. 
	Each summand in the sum above is given by the product of two factors. The second factor $p(\gamma_{t})$ is invariant with respect to any permutation $\sigma$ of $n$ elements. 
	
	Thus, denoting with $\sigma(\gamma_{t})$ the vector $(\gamma_{\sigma(1)t},\ldots,\gamma_{\sigma(n)t})$, for any permutation $\sigma$ of $n$ elements, to prove that $(X_{t\,i})_{i\geq1}$ is conditionally partially exchangeable with respect to $\rho_{t-1}$, we need to prove that 
	\[
	\mathbb{P}[\rho_{t} = \lambda\mid \gamma_{t}, \rho_{t-1}] = \mathbb{P}[\rho_t = \sigma(\lambda)\mid \sigma(\gamma_{t}), \rho_{t-1}]
	\]
	for any $\sigma \in \mathcal{P}(n; \rho_{t-1})$. 
	
	In t-RPM, the left and right hand side of the equation above are respectively
	\[
	\mathbb{P}[\rho_{t} = \lambda\mid \gamma_{t}, \rho_{t-1}] = \frac{\mathbb{P}[\rho_t = \lambda] \mathbbm{I}(\lambda \in P(\gamma_t, \rho_{t-1}))}{\sum_{\lambda'}\mathbb{P}[\rho_t = \lambda'] \mathbbm{I}(\lambda' \in P(\gamma_t, \rho_{t-1}))}
	\]
	and
	\[
	\mathbb{P}[\rho_t = \sigma(\lambda)\mid \sigma(\gamma_{t}), \rho_{t-1}] = \frac{\mathbb{P}[\rho_t = \sigma(\lambda)] \mathbbm{I}(\sigma(\lambda) \in P(\sigma(\gamma_t), \rho_{t-1}))}{\sum_{\lambda'}\mathbb{P}[\rho_t = \sigma(\lambda')] \mathbbm{I}(\sigma(\lambda') \in P(\sigma(\gamma_t), \rho_{t-1}))}
	\]
	where, the sums at the denominators runs over all partitions $\lambda'$ of $n$ elements, $\mathbb{I}$ is the indicator function, and $P(\gamma_t, \rho_{t-1})$ denotes the collection of partitions at time $t$ that
	are compatible with $\rho_{t-1}$ based on 
	$\gamma_{t}$. This collection is the one denoted by $P_{C_t}$ in the paper of \cite{page2022dependent}. 
	
	By marginal exchangeability of $\rho_t$, we have that for any $\sigma$
	\[
	\mathbb{P}[\rho_t = \lambda] = \mathbb{P}[\rho_t = \sigma(\lambda)]
	\]
	
	Consider now the indication functions
	$\mathbbm{I}(\lambda \in P(\gamma_t, \rho_{t-1}))$ and let $\mathcal{R}_t = \{ i : \gamma_{it} = 1\}$ be the sets of indices of those subjects which will not be considered for reallocation time $t$. \cite{page2022dependent} show that 
	\[
	\mathbbm{I}(\lambda \in P(\gamma_t, \rho_{t-1}))=\begin{cases}
		1&\lambda^{\mathcal{R}_t}=\rho_{t-1}^{\mathcal{R}_t}\\
		0&\text{otherwise}\\
	\end{cases}
	\]
	and, thus 
	\[
	\mathbbm{I}(\sigma(\lambda) \in P(\sigma(\gamma_t), \rho_{t-1}))=\begin{cases}
		1&\sigma(\lambda)^{\sigma(\mathcal{R}_t)}=\rho_{t-1}^{\sigma(\mathcal{R}_t)}\\
		0&\text{otherwise}\\
	\end{cases}
	\]
	where $\rho^{\mathcal{R}_t}$ is the \emph{reduced} partition obtained removing from the sets in $\rho$ all elements that are not in the set $\mathcal{R}_t$. 
	
	By Lemma 1, we have 
	\[
	\sigma(\lambda)^{\sigma(\mathcal{R}_t)}=\rho_{t-1}^{\sigma(\mathcal{R}_t)}\quad \text{ iff } \quad\sigma(\lambda)^{\mathcal{R}_t}=\rho_{t-1}^{\mathcal{R}_t} \quad \text{ iff } \quad \lambda^{\mathcal{R}_t}=\sigma^{-1}\left(\rho_{t-1}^{\mathcal{R}_t}\right)
	\]
	Therefore, by Lemma 1, $\mathbbm{I}(\lambda \in P(\gamma_t, \rho_{t-1})) = \mathbbm{I}(\sigma(\lambda) \in P(\sigma(\gamma_t), \rho_{t-1}))$ for any possible realization of $\gamma_t$ if and only if 
	\[
	\rho_{t-1}^{\mathcal{R}_t}=\sigma^{-1}\left(\rho_{t-1}^{\mathcal{R}_t}\right) \quad \text{ iff } \quad 
	\sigma\left(\rho_{t-1}^{\mathcal{R}_t}\right) = \rho_{t-1}^{\mathcal{R}_t} \quad \text{ iff } \quad \sigma \in \mathcal{P}(n; \rho)
	\]
	which proves that t-RPM mixtures are conditionally partially exchangeable. 
	
	To prove that t-RPM mixture are not conditionally exchangeable, consider the counterexample with $n=3$, $\rho_{t-1} = \{\{1,2\},\{3\}\}$, $\rho_t =\{\{1\},\{2,3\}\}$, and $\sigma=(1,3)$. In such a case, the permutation $\sigma$ does not preserve $\rho_{t-1}$ and, as a result, the law of $\rho_t$ conditionally of $\rho_{t-1}=\{\{1,2\},\{3\}\}$ differs from the law of $\rho_t$ conditionally of $\rho_{t-1}=\{\{1\},\{2,3\}\}$. As a result, the conditional law of the sequence of observations at time $t$ is not invariant to any permutation, as prescribed by conditional exchangeability. 
\end{proof}

\section{Proof of Proposition~2}
Before proving Proposition~2, we first introduce the following Lemma.
\begin{lem}
	Given a partitions $\rho$ of $n$ elements and a permutation $\sigma$, 
	\[
	\sigma \in  \mathcal{P}(n; \rho) \qquad \text{ iff }\qquad 
	\sigma^{-1} \in  \mathcal{P}(n; \rho)
	\]
	where $\sigma^{-1}$ denotes the inverse of $\sigma$, i.e., $\sigma^{-1}(i) = j$, for $j$ such that $\sigma(j) = i$ and
	$\mathcal{P}(n; \rho)$ denotes the space of permutations of $n$ elements that preserve $\rho$, cf. Section~2. 
\end{lem}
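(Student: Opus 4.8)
The plan is to unwind the definition of $\mathcal{P}(n;\rho)$ and exploit the evident symmetry between a permutation and its inverse. Recall that, by the definition given in Section~2, $\sigma \in \mathcal{P}(n;\rho)$ means $c_{\sigma(i)} = c_{i}$ for every $i \in [n]$, where $\bm{c}$ is the allocation vector encoding $\rho$; equivalently, $\sigma$ maps every block of $\rho$ into itself, and since $\sigma$ is a bijection of the finite set $[n]$, it in fact restricts to a bijection of each block onto itself.

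First I would fix $\sigma \in \mathcal{P}(n;\rho)$ and deduce $\sigma^{-1} \in \mathcal{P}(n;\rho)$. Given an arbitrary $j \in [n]$, set $i = \sigma^{-1}(j)$, so that $j = \sigma(i)$. The defining property of $\sigma$ applied at $i$ yields $c_{j} = c_{\sigma(i)} = c_{i} = c_{\sigma^{-1}(j)}$. As $j$ was arbitrary (and, since $\sigma$ is surjective, ranges over all of $[n]$ as $i$ does), this shows $c_{\sigma^{-1}(j)} = c_{j}$ for every $j$, i.e.\ $\sigma^{-1} \in \mathcal{P}(n;\rho)$.

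For the reverse implication I would simply apply the implication just established to the permutation $\sigma^{-1}$ in place of $\sigma$: if $\sigma^{-1} \in \mathcal{P}(n;\rho)$, then $(\sigma^{-1})^{-1} = \sigma \in \mathcal{P}(n;\rho)$. This closes the equivalence.

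There is no real obstacle here; the only step meriting a moment's care is the index substitution $j = \sigma(i)$, together with the elementary observation that a bijection of $[n]$ that sends every block of $\rho$ into itself must restrict to a bijection of each block — which is exactly the structural fact making the statement true, and which also affords a one-line alternative argument: block by block, $\sigma^{-1}$ is the inverse of the restriction of $\sigma$ to that block, hence again preserves the blocks.
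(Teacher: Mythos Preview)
Your proof is correct. The paper itself does not supply a proof of this lemma, treating it as elementary; your argument via the index substitution $j=\sigma(i)$ together with the involution $(\sigma^{-1})^{-1}=\sigma$ is exactly the natural way to make the statement precise, and nothing further is needed.
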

\begin{proof}[Proof of Proposition~2]
	If $(X_{1i},\ldots, X_{Ji})_{i\geq1}$ follows the separate exchangeable random partition mixture of \cite{lin2021separate}, then 
	\begin{align*}
		X_{ji}\mid S_j = k, M_{ik} = \ell &\overset{ind}{\sim} k(X_{ji}, \theta^{\star}_{\ell})\qquad \text{for } i=1,2,\ldots \text{ and }j=1,\ldots,J\\
		\mathbb{P}(M_{ik} = \ell \mid w_{k\ell}) = w_{k\ell}\quad&\quad \bm{w}_k = (w_{k1},w_{k2},\ldots) \overset{iid}{\sim} \text{GEM}(\alpha)\\
		\mathbb{P}(S_{j} = k \mid \pi_{k}) = \pi_{k}\quad&\quad \bm{\pi}= (\pi_{1},\pi_{2},\ldots) \sim \text{GEM}(\beta)\\
		\theta^{\star}_{\ell}\overset{iid}{\sim}G_0
	\end{align*}
	where $\text{GEM}(\alpha)$
	denote a stick-breaking prior for a sequence of weights \citep{sethuraman1994constructive} and $G_0$ is
	an absolutely continuous distribution. The partition $\rho_j$ corresponding to the $j$th layer $(X_{ji})_{i\geq1}$ is encoded by $(M_{iS_j})_{i\geq1}$ and for any $n\geq 1$, $j,j' \in [J]$ and any realization $\rho$ of the partition $\rho_{j'}$, we have
	\begin{align*}
		p(X_{j1},\ldots,X_{jn}\mid \rho_{j'}=\rho) 
		=&\, \mathbb{P}[S_j = S_{j'}] \,p(X_{j1},\ldots,X_{jn}\mid \rho_j = \rho)+
		\mathbb{P}[S_j \neq S_{j'}] p(X_{j1},\ldots,X_{jn})
	\end{align*}
	and, similarly, 
	\begin{align*}
		p(X_{j\sigma(1)},\ldots,X_{j\sigma(n)}\mid \rho_{j'}=\rho) =&\, \mathbb{P}[S_j = S_{j'}] \,p(X_{j\sigma(1)},\ldots,X_{j\sigma(n)}\mid \rho_j = \rho)\\
		&\,+\mathbb{P}[S_j \neq S_{j'}] \, p(X_{j\sigma(1)},\ldots,X_{j\sigma(n)}).
	\end{align*}
	Thus,
	\begin{align*}
		D:=&\,p(X_{j1},\ldots,X_{jn}\mid \rho_{j'}=\rho) - p(X_{j\sigma(1)},\ldots,X_{j\sigma(n)}\mid \rho_{j'}=\rho) \\
		=&\,  \mathbb{P}[S_j = S_{j'}] \left(\,p(X_{j1},\ldots,X_{jn}\mid \rho_j = \rho) - p(X_{j\sigma(1)},\ldots,X_{j\sigma(n)}\mid \rho_j = \rho)\,\right)\\
		=&\,  \mathbb{P}[S_j = S_{j'}] \left(\, p(X_{j1},\ldots,X_{jn}\mid \rho_j = \rho) - p(X_{j1},\ldots,X_{jn}\mid \rho_j = \sigma^{-1}(\rho))\right)
	\end{align*}
	By Lemma~2, for any $\sigma \in \mathcal{P}(n; \rho)$, we have $D = 0$, where, we recall, that
	$\mathcal{P}(n; \rho)$ denotes the space of permutations of $n$ elements that preserve $\rho$, see Definition~1 in Section~2. 
	
	To prove that the separate exchangeable random partition mixture is not conditionally exchangeable, consider the counterexample with $n=3$, $\rho_{j'} = \{\{1,2\},\{3\}\}$, $\sigma=(1,3)$ and $(X_{j1},X_{j2},X_{j3}) \in (d(\bar\theta^{\star}_{\ell}-\epsilon) ,d(\bar\theta^{\star}_{\ell}+\epsilon),d(\bar\theta^{\star}_{\ell}+2\epsilon))$, where $\bar\theta^{\star}_{\ell} = \mathbb{E}[\theta^{\star}_{\ell}]$, $\epsilon>0$ and $dy = [y, y + \nu)$, with $\nu$ arbitrarily small. In such a case, the permutation $\sigma$ does not preserve $\rho_{j'}$ and, as a result, the law of $\rho_j$ conditionally of $\rho_{j'}=\{\{1,2\},\{3\}\}$ differs from the law of $\rho_j$ conditionally of $\rho_{j'}=\{\{1\},\{2,3\}\}$. The conditional law of the sequence of observations corresponding to $\rho_j$ is not invariant to any permutation, as prescribed by conditional exchangeability.  
\end{proof}

\section{Proof of Proposition~3}
\begin{proof}[Proof of Proposition~3]
	Denoting with $X_{ix}$ the response measured on the ith unit corresponding to covariate's value $x \in \mathcal{X}$ and following a mixture model with mixing probability provided by the dependent processes in \cite{maceachern2000dependent}, then 
	\begin{align*}
		X_{ix}\mid \bm{\theta}^{\star}_x &\overset{iid}{\sim} k(X_{ix}, \theta_{xi})\qquad \text{for } i=1,\ldots,n \text{ and for any } x \\
		\theta_{xi}&\overset{ind}{\sim} G_{x} \\
		\{G_{x}: x \in \mathcal{X}\}&\sim DDP
	\end{align*}
	For a formal and detailed definition of 
	$\{G_{x}: x \in \mathcal{X}\}\sim DDP$
	we refer to the recent review paper of \cite{quintana2022dependent}.
	
	Denoting with $\rho_{x}$ the partition induced by $G_{x}$, for any $\sigma$ permutation of $n$ elements, we have
	\begin{align*}
		&\,p(X_{x'1},\ldots,X_{x'n}\mid \rho_{x}) =
		\int p(X_{x'1},\ldots,X_{x'n}\mid G_{x'},\rho_{x})\,\text{d}\,p(G_{x'}\mid \rho_{x})\\
		&\,=\int p(X_{x'1},\ldots,X_{x'n}\mid G_{x'})\text{d}\,p(G_{x'}\mid \rho_{x})=\int p(X_{x'\sigma(1)},\ldots,X_{x'\sigma(n)}\mid G_{x'})\text{d}\,p(G_{x'}\mid \rho_{x})\\
		&\,= p(X_{x'\sigma(1)},\ldots,X_{x'\sigma(n)}\mid \rho_{x}).
	\end{align*}
\end{proof}

\section{Proof of Theorem~2}
\begin{proof}[Proof of Theorem 2]
	Note that, for any $n\geq1$,
	the second layer observations, admit the following almost sure representation in terms of a latent collection of probability measures $(\tilde q_1,\ldots, \tilde q_n)$ such as 
	\begin{align*}
		X_{2i}\mid \tilde q_i \overset{ind}{\sim} \int k_2(X_{2i};\xi) \tilde q_i(\dd \xi),\qquad \text{where} \enskip \tilde q_i \mid \bm{w},\tilde p_{21},\ldots, \tilde p_{2M}\overset{iid}{\sim} \sum_{m=1}^M w_m \delta_{ \tilde p_{2m}}
	\end{align*}
	and $\bm{w}$ is the sequence of weights in the almost-sure representation of $\tilde p_{1}$.
	Moreover, conditioning both layers to the allocations variables $\bm{c}_1$ and the unique values $\bm{\theta^{\star}}$ corresponding to the first layer, we get 
	\begin{align*}
		(X_{1i},X_{2i})\mid c_{1i} = m,\, \theta^{\star}_m,\, \tilde p_{21},\ldots, \tilde p_{2M} \overset{ind}{\sim} k_1(X_{1i};\theta^{\star}_m)\left(\sum_{s=1}^S q_{ms}  k_2(X_{2i};\xi^{\star}_s)\right).
	\end{align*}
\end{proof}

\section{Proof of Proposition~4}
\begin{proof}[Proof of Proposition~4]
	Note that, for any $i\neq j$, by exchangeability of the rows in the data matrix, we have 
	\begin{equation*}
		\mathbb{P}(c_{\ell i}=c_{\ell j}) =
		\mathbb{P}(c_{\ell 1}=c_{\ell 2})
		\qquad \text{and} \qquad \mathbb{P}(c_{\ell i}=c_{\ell j}, c_{\ell' i}=c_{\ell' j}) =
		\mathbb{P}(c_{\ell 1}=c_{\ell 2} , c_{\ell' 1}=c_{\ell' 2})
	\end{equation*}
	Thus
	\[
	\text{$\tau$} = \frac{\mathbb{P}[c_{21} = c_{22}  \mid c_{11} = c_{12}] - \mathbb{P}[c_{21} = c_{22} \mid c_{11} \neq c_{12}]}{\mathbb{P}[c_{21} = c_{22}  \mid c_{11} = c_{12}]}
	\]
	where the event $c_{\ell 1} = c_{\ell 2}$ coincides with the event $K_{\ell 2} = 1$ and
	$c_{\ell 1} \neq c_{\ell 2}$ with the event $K_{\ell 2} = 2$,
	where $K_{\ell n}$ denote the number of cluster at layer $\ell$ in a sample of $n$ subjects. 
	Similarly,
	\begin{align*}
		ER=&\, {n\choose 2} ^{-1}\mathbb{E}\left[\sum_{i=1}^n\sum_{j=i+1}^n \mathbbm{1}(c_{1i}=c_{1j})\mathbbm{1}(c_{2i}=c_{2j}) +  \sum_{i=1}^n\sum_{j=i+1}^n \mathbbm{1}(c_{1i}\neq c_{1j})\mathbbm{1}(c_{2i} \neq c_{2j})\right] \\
		=&\, \mathbb{P}(c_{1i}=c_{1j},c_{2i}=c_{2j}) + \mathbb{P}(c_{1i}\neq c_{1j},c_{2i} \neq c_{2j}) \\
		=&\mathbb{P}(c_{11}=c_2,s_1=s_2) + \mathbb{P}(c_{11}\neq c_2,s_1 \neq s_2) \\
		=&\,\mathbb{P}(K_{12} = 1, K_{22} = 1) + \mathbb{P}(K_{12} = 2, K_{22} = 2).
	\end{align*}
\end{proof}
\section{Proof of Theorem~3 and Corollary~1} 
Proof of Theorem~3 follows directly by combining equations (6) and (7) in Section 4. Corollary~1 follows directly from Theorem~3 and Proposition~4.
\section{Proof of Theorem~4 and Corollary~2} 
\begin{proof}[Proof of Theorem~4]
The marginal EPPF of the partition at layer 1  is a well-known result \citep[see, e.g.,][]{green2001modelling,mccullagh2008many,miller2018mixture, argiento2019infinity}.  Considering a specific partition $\rho_1$ into $K_{1n}$ sets of the $n$ observations, under eq. (8) in Section 5, we have that 
\begin{equation*}
	%\label{eq:firstEPPF}
	p(\rho_1) = V(n,K_{1n}) \, \prod_{m = 1}^{K_{1n}} \frac{\Gamma(\gamma + n_{m})}{\Gamma(\gamma)},
\end{equation*}
where  $n_{m}$ is the frequency of the $m$th cluster in order of appearance, i.e.,
\[
n_{m} = \sum_{i=1}^n \mathbbm{1}_{m}(c^{\star}_{1i}) \text{ with } \sum_{m=1}^{K_{1n}} n_{m} = n \qquad \text{and} \qquad V(n,K_{1n}) = \sum\limits_{M=1}^{+\infty} \frac{M_{(K_{1n})}}{(\gamma K_{1n})^{(n)}} p_M(M)
\]
where $x^{(k)} = \Gamma(x+k)/\Gamma(x) = x(x+1)\ldots(x+k-1)$ and $x_{(k)} = \Gamma(x+1) / \Gamma(x-k+1) = x(x-1)\ldots(x-k+1)$, where $\Gamma(x)$ denote the Gamma function in $x$ and $x_{(0)}=1$ and $x_{(0)}=1$ by convention. 
While from equation (9), we have that 
\[
p(\rho_2 \mid \rho_1) = (1- \omega) \mathbbm{1}(\rho_1=\rho_2) + \omega \,V(n,K_{2n}) \, \prod_{s = 1}^{K_{2n}} \frac{\Gamma(\alpha + \sum_{m=1}^{K_{1n}}n_{ms})}{\Gamma(\alpha)}
\]
where $n_{ms}$ is the number of observations in the first-layer cluster $m$ and second-layer cluster $s$, when the clusters are in order of appearance. 

Proof of Theorem~4 follows directly by combining the two partition functions above. 
Corollary~2 follows directly from Theorem~4 and Proposition~4.
\end{proof}
\bibliographystyle{agsm}

\bibliography{references.bib} 
\end{document}